\documentclass[article]{elsarticle}
\usepackage{lineno,hyperref}
\modulolinenumbers[5]
\usepackage{amsmath}
\usepackage{amsfonts}
\usepackage{amssymb}
\usepackage{amsthm}
\usepackage{moreverb}
\usepackage{subfigure}
\usepackage{color}
\usepackage{epsfig}
\usepackage{epstopdf}
\usepackage{algorithm}
\usepackage{graphicx}
\usepackage{amssymb}
\usepackage{url} 
\usepackage{textcomp}
\usepackage{mathptmx}
\usepackage{epstopdf}
\usepackage{setspace}
\usepackage{eucal}
\usepackage{caption}
\newtheorem{theorem}{Theorem}[section]

\newtheorem{lemma}[theorem]{Lemma}
\usepackage[margin=1.3in]{geometry}
\hypersetup{
	colorlinks=true,
	citecolor=blue,
	filecolor=black,
	linkcolor=blue
}

\journal{Digital Signal Processing}









\bibliographystyle{elsarticle-num}

\begin{document}

\begin{frontmatter}

\title{Capacity and outage analysis of a dual-hop decode-and-forward relay-aided NOMA scheme}
\author{Md. Fazlul Kader$^{1}$\corref{cor1}}
\ead{f.kader@cu.ac.bd}
\cortext[cor1]{Corresponding author}
\author{Mohammed Belal Uddin$^2$}
\ead{ahad.belal@kumoh.ac.kr}
\author{S. M. Riazul Islam$^3$}
\ead{riaz@sejong.ac.kr}
\author{Soo Young Shin$^2$}
\ead{wdragon@kumoh.ac.kr}


\address{$^1$Department of Electrical and Electronic Engineering, 
	University of Chittagong, Chittagong-4331, Bangladesh.}
\address{$^2$Department of IT Convergence Engineering, Kumoh National Institute of Technology, Gumi 39177, South Korea}
\address{$^3$Department of Computer Science and Engineering,
	Sejong University, Seoul 05006, South Korea\\}





%

\begin{abstract}
Non-orthogonal multiple access (NOMA) is regarded as a candidate radio access technique for the next generation wireless networks because of its manifold spectral gains. A two-phase cooperative relaying strategy (CRS) is proposed in this paper by exploiting the concept of both downlink and uplink NOMA (termed as DU-CNOMA). In the proposed protocol, a transmitter considered as a source transmits a NOMA composite signal consisting of two symbols to the destination and relay during the first phase, following the principle of downlink NOMA. In the second phase, the relay forwards the symbol decoded by successive interference cancellation to the destination, whereas the source transmits a new symbol to the destination in parallel with the relay, following the principle of uplink NOMA. The ergodic sum capacity, outage probability, outage sum capacity, and energy efficiency are investigated comprehensively along with analytical derivations, under both perfect and imperfect successive interference cancellation. To inquire more insight into the system outage performance, diversity order for each symbol in the proposed DU-NOMA is also demonstrated. The performance improvement of the proposed DU-CNOMA over the conventional CRS using NOMA is proved through analysis and computer simulation. Furthermore, the correctness of the author's analysis is proved through a strong agreement between simulation and analytical results.
\end{abstract}
\begin{keyword}
	Cooperative relaying \sep Downlink\sep Ergodic capacity \sep Non-orthogonal multiple access \sep  Successive interference cancellation \sep Uplink.
\end{keyword}
\end{frontmatter}

\linenumbers
\section{Introduction}
To deal with the high data rate requirements of the next generation wireless networks, integration of multiple technologies is anticipated~\cite{1, 2}. Cooperative relaying strategy (CRS) is an important technology for wireless networks to improve system capacity, combat fading, and extend service coverage~\cite{3, 4, 5, 6}. In addition, non-orthogonal multiple access (NOMA) has garnered substantial attention from the industry and academia to meet with the large data rate requirements of 5G and beyond~ \cite{7, 8, 9, 10}. In this paper, cooperative diversity and NOMA are integrated, which can be a promising approach to meet the capacity demands for future wireless networks~\cite{11, 12, 13}.

A lot of different varieties of NOMA can be found in the literature~\cite{14}. Among them, cooperative NOMA (C-NOMA) is one of the most dynamic areas of research \cite{11, 12, 13, 14, 15, 16}. Based on the direction of data transmission C-NOMA can be further classified as uplink C-NOMA~\cite{17, 18} and downlink C-NOMA~\cite{19, 20, 21}.
In \cite{19}, a cooperative relaying scheme (CRS) using NOMA (termed as CRS-NOMA) was proposed to improve the spectral efficiency over independent Rayleigh fading channels, where a source (S) transmits a superposition coded composite signal to the relay (R) and the destination (D), during the first time slot. Then, R decodes the symbol to be relayed by performing successive interference cancellation (SIC), whereas D decodes own symbol considering other signal as noise. In the subsequent time slot, R retransmits the decoded symbol with full power to D. The outcome of CRS-NOMA~\cite{19} demonstrates that CRS-NOMA can achieve better sum capacity than traditional decode-and-forward~\cite{3} for a high signal-to-noise
ratio (SNR) $\rho$, but shows worse performance for a low $\rho$. In~\cite{20}, the performance of CRS-NOMA~\cite{19} was investigated over Rician fading channels. A novel detection scheme for CRS-NOMA~\cite{19} was proposed in~\cite{21} (termed as CRS-NOMA-ND), where D uses maximal-ratio combining (MRC) and another SIC to jointly decode transmitted symbols by source. It was shown that CRS-NOMA-ND~\cite{21} outperforms CRS-NOMA~\cite{19}, particularly when the link between S and R is better than the R to D link. Note that only achievable average rate was analyzed in~\cite{19, 20, 21}.  In~\cite{22, 23}, the authors considered both uplink and downlink NOMA under non-cooperative scenario, whereas~\cite{24, 25} exploited the concept of both uplink and downlink NOMA under cooperative scenario. In~\cite{24}, a cooperative relay sharing network was proposed, where multiple sources can communicate with their corresponding destination simultaneously through a common relay. A C-NOMA scheme considering both downlink and uplink transmission systems, was proposed in ~\cite{25}, where a strong user works as a cooperative relay for the weak user.

Unlike the existing works, in this paper, a CRS-NOMA scheme using the concept of downlink and uplink NOMA (termed as DU-CNOMA) is proposed. In the proposed DU-CNOMA, S transmits a superposition coded composite signal consisting of two symbols $s_1$ and $s_2$ to D and R, according to the principle of downlink NOMA as in~\cite{19, 20, 21}, during the first time slot. However, in the subsequent time slot, unlike~\cite{19, 20, 21}, S transmits a new symbol $s_3$ and R transmits decoded symbol $s_2$ to D simultaneously, according to the principle of uplink NOMA \cite{17, 22, 23}. Furthermore, unlike~\cite{19, 20, 21}, where only perfect SIC is considered, we consider both perfect and imperfect SIC by taking into account a more realistic scenario. Principal contributions of this paper are outlined as follows: 
	\begin{enumerate}
		\item A dual-hop CRS by taking into account NOMA is proposed and investigated over independent Rayleigh fading channels.
		\item The closed-form expressions of the ergodic sum capacity (ESC), outage probability (OP), and outage sum capacity (OSC) of DU-CNOMA are derived under both perfect and imperfect SIC. The analytical results are validated by Monte Carlo simulation. To look into more insight of the proposed DU-NOMA system, diversity order (DO) for each symbol is also investigated.
		\item The energy efficiency (EE) of DU-NOMA is computed. Moreover, for the purpose of comparison, EE of CRS-NOMA~\cite{19} and CSR-NOMA-ND~\cite{21} is also computed.
		\item The performance improvement of the proposed DU-CNOMA over CRS-NOMA~\cite{19}, and CSR-NOMA-ND~\cite{21} is manifested through analysis and simulation. The outcomes of this paper demonstrate that the proposed DU-NOMA is more spectral and energy efficient as compared to CRS-NOMA~\cite{19} and CSR-NOMA-ND~\cite{21}.
\end{enumerate}
The rest of this paper is organized as follows. The system model with detailed description of the proposed protocol is provided in Section 2. The channel model is also demonstrated in this section. The closed-form expressions of the ESC, OP, and OSC are presented in Section 3, 4, and 5, respectively. In Section 6, EE is investigated. The numerical results that are validated by Monte Carlo simulation are provided in Section 7, and finally, the conclusion is drawn along with future recommendations in Section 8.
\begin{figure}[t]
	\centering
	\includegraphics[width=6in,height=6in,keepaspectratio]{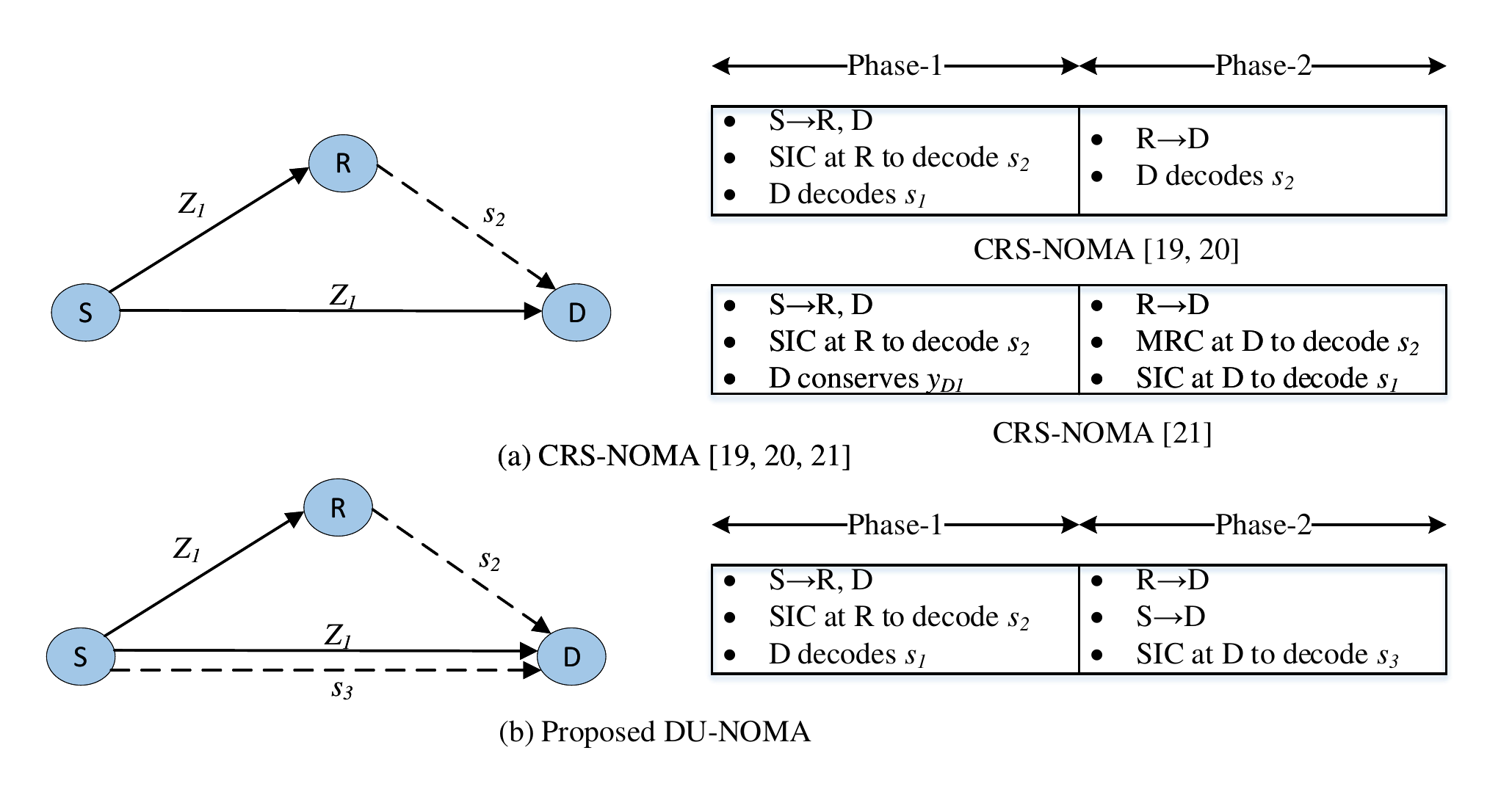}
\caption{System model: (a) CRS-NOMA~\cite{19, 20, 21} and (b) Proposed DU-CNOMA.}
	\label{Fig1}
\end{figure}
\begin{figure}[t]
	\centering
	\includegraphics[width=4in,height=10in,keepaspectratio]{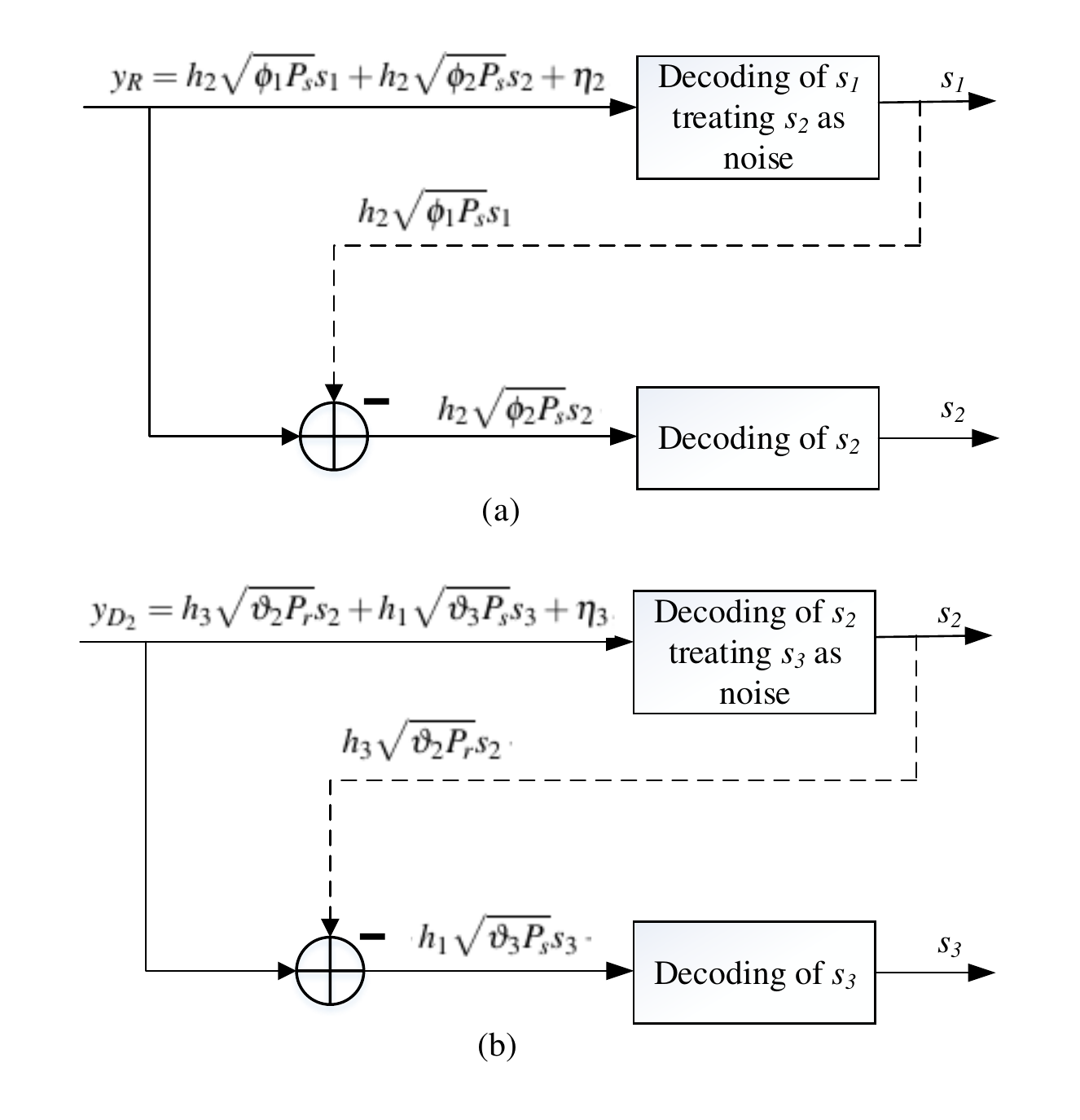}
	\caption{SIC process (a) at R following the principle of downlink NOMA  and (b) at D following the principle of uplink NOMA, in DU-NOMA.}
	\label{Fig1a}
\end{figure}

\section{Network architecture and protocol descriptions}
A half-duplex cooperative relaying protocol exploiting the concept of both downlink and uplink NOMA is proposed. The system architecture consists of a source (S), a DF relay (R), and a destination (D), as drawn in Fig. \ref{Fig1}. Fig. \ref{Fig1} (a) shows the system architecture considered in~\cite{19, 20, 21}, whereas Fig. \ref{Fig1} (b) shows the system architecture of the proposed DU-NOMA, with related illustrations.
All the links (i.e., S-to-R, S-to-D, and R-to-D) are considered available and subjected to independent Rayleigh fading. Channel coefficient with zero mean and variance $\lambda_{i}=d_{i}^{-\nu}$ is represented by $h_{i}\sim{CN}(0,\lambda_{i})$, where $d$ is the distance, $\nu$ is the path loss exponent, and $i\in\{1,2,3\}$. Parameters $h_1$, $h_2$, and $h_3$ refer to the respective complex channel coefficient of S-to-D, S-to-R, and R-to-D links. Without loss of generality, it is assumed that  $\lambda_{1}<\lambda_{2}$ and $\lambda_{1}<\lambda_{3}$, under statistical channel state information~\cite{25}. The data transmission in the proposed protocol is performed by two cooperative phases as follows. 
\subsection{Phase-1 ($t_1$)}
At the first phase of the transmission, the source S transmits a composite NOMA signal $Z_1=\sqrt{\phi_1P_s}s_1+\sqrt{\phi_2P_s}s_2$ consisting of two symbols $s_1$ and $s_2$ to D and R simultaneously as per law of downlink NOMA. The symbols $s_1$ and $s_2$ are corresponded to D and R, respectively. The total transmit power of S, the power allocation factor with $s_1$, and the power allocation factor with $s_2$ are respectively denoted by $P_s$, $\phi_1$, and $\phi_2$ wherein $\phi_1>\phi_2$ and $\phi_1+\phi_2=1$. Thus, the received signals at D and R are respectively written as
\begin{align}
y_{D_1}=h_1\sqrt{\phi_1P_s}s_1+h_1\sqrt{\phi_2P_s}s_2+\eta_1, \label{eq1.1}\\
y_R=h_2\sqrt{\phi_1P_s}s_1+h_2\sqrt{\phi_2P_s}s_2+\eta_2, \label{eq1.2}
\end{align}
where $\eta_1$ and $\eta_2$ are the complex additive white Gaussian noise (AWGN) at D and R, respectively with zero mean and variance $\sigma^2$.

Upon receiving the signal, firstly, R extracts $s_1$ by treating $s_2$ as noise. Then, it performs SIC to cancel out the extracted information from the received signal and thus it extracts $s_2$. For a clear understanding, the SIC process at R following the principle of downlink NOMA, is pictorially represented in Fig. 2 (a). The received signal-to-interference plus noise ratios (SINRs) at R for symbols $s_1$ and $s_2$ are respectively represented by 
\begin{align}
\gamma^{t_1}_{s_1\rightarrow s_2} =& \frac{\phi_1P_s |h_2|^2}{\phi_2P_s|h_2|^2+\sigma^2}=\frac{\phi_1\rho |h_2|^2}{\phi_2\rho|h_2|^2+1}, \label{eq1} \\
\gamma^{t_1}_{s_{2}} =& \frac{\phi_2\rho |h_2|^2}{\phi_1\rho |\bar{h}_{2}|^2+1},
\label{eq2}
\end{align}
where $\bar{h}_{2}\sim CN(0,\kappa_1\lambda_{2})$, $\rho\triangleq \frac{ P_s}{\sigma^2}$ is the transmit SNR of S and $\sigma^2$ is the noise variance. It is noted that several potential implementation issues (i.e, error propagation and complexity scaling) with the use of SIC may lead to errors in decoding even under the perfect channel estimation assumption.  As a result the interference may not be removed completely and there exists residual interference~\cite{17, 24, 26}. Hence, in the proposed DU-NOMA, the parameter $\kappa_1$ ($0<\kappa_1\leq1$) represents the level of residual interference at R because of imperfect SIC. As a particular case, $\kappa_1=0$ represents perfect SIC i.e., the interference is canceled completely and no residual interference exists. Contrarily,  $\kappa_1=1$ represents that SIC is not performed at R at all, which means that R has to consider $s_1$ as interference to decode $s_2$.

On the other hand, D decodes $s_1$ by recking of $s_2$ as noise. So, the received SINR regarding symbol $s_{1}$ at D is obtained as 
\begin{align}\label{eq3}
\gamma^{t_1}_{s_{1}} ={}& \frac{\phi_1\rho |h_{1}|^2}{\phi_2\rho|h_{1}|^2+1}.
\end{align}
\subsection{Phase-2 ($t_2$)}
During the second phase, according to the law of uplink NOMA, R retransmits the decoded symbol $s_2$ and S transmits a new symbol $s_3$ to D at the same instant of time. The respective assigned powers with $s_2$ and $s_3$ are $\sqrt{P_r\vartheta_2}$ and $\sqrt{P_s\vartheta_3}$, respectively. The total transmit power of R, the power allocation factor with $s_2$, and the power allocation factor with $s_3$ are respectively denoted by $P_r$, $\vartheta_2$, and $\vartheta_3$ wherein $\vartheta_2>\vartheta_3$. The received signal at D is therefore given by
\begin{align}
y_{D_2}=h_3\sqrt{\vartheta_2P_r}s_2+h_1\sqrt{\vartheta_3P_s}s_3+\eta_3, \label{eq1.3}
\end{align}
where $\eta_3$ is the complex AWGN at D during $t_2$ with zero mean and variance $\sigma^2$. As the information related to $s_2$ is dominant over $s_3$ at the destination, D first decodes $s_2$ by considering $s_3$ as noise. After then, by applying SIC procedure, it subtracts the decoded information to get $s_3$. For a clear understanding, the SIC process at D following the principle of uplink NOMA is pictorially represented in Fig. 2 (b). The received SINRs concerning $s_2$ and $s_3$ at D are respectively given by 
\begin{align}
\gamma^{t_2}_{s_2} =& \frac{\vartheta_2P_r |h_3|^2}{\vartheta_3P_s|h_1|^2+\sigma^2}=\frac{\vartheta_2\rho |h_3|^2}{\vartheta_3\rho|h_1|^2+1}, \label{eq4} \\
\gamma^{t_2}_{s_{3}} =& \frac{\vartheta_3\rho |h_1|^2}{\vartheta_2\rho |\bar{h}_{3}|^2+1}, \label{eq5}
\end{align}
where $\bar{h}_{3}\sim CN(0,\kappa_2\lambda_{3})$, $\rho\triangleq \frac{ P_r}{\sigma^2}$ is the transmit SNR of R, and $\kappa_2$ represents the level of residual interference at D. Note that $\kappa_2$ shows similar behavior like $\kappa_1$.
\subsection{Sum capacity}
The end-to-end data rate of a multi-hop cooperative network is determined by the weakest link. So, the achievable rate related to $s_1$ is depicted by 
\begin{align}\label{eq6}
C_{1} &=\frac{1}{2} \log_2\left(1+\min\left(\gamma^{t_1}_{s_1\rightarrow s_2},\gamma^{t_1}_{s_{1}}\right)\right),
\end{align}
The achievable rate associated with $s_2$ is dependent on (\ref{eq2}) and (\ref{eq4}), which can be denoted by 
\begin{align}\label{eq7}
C_{2} &=\frac{1}{2} \log_2\left(1+\min\left(\gamma^{t_1}_{s_{2}},\gamma^{t_2}_{s_2}\right)\right),
\end{align}
By using (\ref{eq5}), the achievable rate related to $s_3$ is given by 
\begin{align}\label{eq8}
C_{3} &=\frac{1}{2} \log_2\left(1+\gamma^{t_2}_{s_{3}}\right),
\end{align}
Therefore, the sum capacity of the proposed DU-CNOMA system can be calculated by summing up (\ref{eq6}), (\ref{eq7}), and (\ref{eq8}) as follows
\begin{align}\label{eq9}
C_{\textup{sum}}^{\textup{prop}} &=C_{1}+C_{2}+C_{3}.
\end{align}
\section{Capacity analysis}
The closed-form ESC expression of the proposed DU-CNOMA is derived over independent Rayleigh fading channel, in this section.
\subsection{Ergodic capacity related to $s_1$}
The achievable rate of (\ref{eq6}), can be simplified as \cite[eq. (8)]{19}
\begin{align}\label{eq10}
C_{1} &=\frac{1}{2} \log_2\left(1+\min\{|h_1|^2, |h_2|^2\}\rho\right)\nonumber\\
&-\frac{1}{2} \log_2\left(1+\min\{|h_1|^2, |h_2|^2\}\rho\phi_2\right),
\end{align}
Let $W\triangleq\text{min}\left(|h_1|^2,|h_2|^2\right)$. Applying PDF $f_{|h_\iota|^2}\left(w\right)=\left(1/\lambda_{\iota}\right)e^{-w/\lambda_{\iota}}$ for $\iota\in\{1,2\}$, the CDF of $W$ is derived as $F_W\left(w\right)=1-e^{-w\left(\frac{1}{\lambda_{1}}+\frac{1}{\lambda_{2}}\right)}$. Then, the probability density function of $W$ is derived by taking the derivative of $F_W\left(w\right)$ as 
\begin{align}\label{eq11}
f_W\left(w\right)=\left(\frac{1}{\lambda_{1}}+\frac{1}{\lambda_{2}}\right)e^{-w\left(\frac{1}{\lambda_{1}}+\frac{1}{\lambda_{2}}\right)}
\end{align}
Now, using (\ref{eq10}) and (\ref{eq11}), the EC associated with $s_1$ can be obtained as
\begin{align}\label{eq12}
\bar{C}^{\textup{ex}}_{1} &=\text{E}\{C_{1}\}\nonumber\\ 
&=\frac{1}{2}\displaystyle \int^{\infty}_{0}\left\{\log_2\left(1+\rho w\right)-\log_2\left(1+\rho w\phi_2\right)\right\}f_{W}(w)\,dw
\end{align}
Using $\log_2 (x)=\frac{\ln (x)}{\ln 2}$, (\ref{eq12}) can be written as
\begin{align}\label{eq13}
\bar{C}^{\textup{ex}}_{1} &=\frac{1}{2 \ln 2}\displaystyle \int^{\infty}_{0}\left\{\ln\left(1+\rho w\right)-\ln\left(1+\rho w\phi_2\right)\right\}f_{W}(w)\,dw
\end{align}
By applying $\displaystyle \int^{\infty}_{0}e^{-mw}\ln\left(1+nw\right)\,dw=-\frac{1}{m}e^{m/n}\text{Ei}\left(-m/n\right)$~\cite[eq. (4.337.2)]{27}, 
\begin{align}\label{eq14}
\bar{C}^{\textup{ex}}_{1}&=-\frac{1}{2\ln 2}e^{\frac{1}{\rho}\left(\frac{1}{\lambda_{1}}+\frac{1}{\lambda_{2}}\right)}\text{Ei}\left(-\frac{1}{\rho}\left(\frac{1}{\lambda_{1}}+\frac{1}{\lambda_{2}}\right)\right)\nonumber\\
&+\frac{1}{2 \ln 2}e^{\frac{1}{\phi_2\rho}\left(\frac{1}{\lambda_{1}}+\frac{1}{\lambda_{2}}\right)}\text{Ei}\left(-\frac{1}{\phi_2\rho}\left(\frac{1}{\lambda_{1}}+\frac{1}{\lambda_{2}}\right)\right)
\end{align} 
where $\text{E}\{\cdot\}$ and $\text{Ei}\{\cdot\}$ denote the expectation operator and exponential integral function, respectively~\cite{27}.
\subsection{Ergodic capacity related to $s_2$}
Let $U\triangleq\gamma^{t_1}_{s_2}$, $V\triangleq\gamma^{t_2}_{s_2}$, and $Z\triangleq\text{min}\left(U,V\right)$. The CDF of $U$ and $V$ can be respectively written as~\cite[eq. (7)]{24}
\begin{align}
F_U\left(u\right) &=1-\frac{\phi_2\lambda_{2}}{\phi_2\lambda_{2}+\phi_1\kappa_1\lambda_{2}v}e^{-\frac{v}{\phi_2\rho\lambda_{2}}}\nonumber\\
&=1-\frac{p}{p+u}e^{-\frac{u}{\phi_2\rho\lambda_2}},
\label{eq15} \\
F_V\left(v\right) &=1-\frac{\vartheta_2\lambda_{3}}{\vartheta_2\lambda_{3}+\vartheta_3\lambda_{1}v}e^{-\frac{v}{\vartheta_2\rho\lambda_{3}}}\nonumber\\
&=1-\frac{g}{g+v}e^{-\frac{v}{\vartheta_2\rho\lambda_{3}}},
\label{eq16}
\end{align}
where $p=\phi_2 \lambda_{2}/\phi_1\kappa_1\lambda_{2}$ and $g=\vartheta_2\lambda_{3}/\vartheta_3\lambda_{1}$. Using (\ref{eq15}) and (\ref{eq16}), the CDF of $Z$ can be obtained as
\begin{align}\label{eq17}
F_Z\left(z\right) 
&=1-\frac{pg}{(p+z)(g+z)}e^{-qz},
\end{align}
where $q=\frac{1}{\phi_2\rho\lambda_{2}}+\frac{1}{\vartheta_2\rho\lambda_{3}}$. So, the exact EC related to $s_2$ is derived as
\begin{align}\label{eq18}
\bar{C}^{\textup{ex}}_{2} ={}&\text{E}\{C_{2}\} =\frac{1}{2}\displaystyle \int^{\infty}_{0}\log_2\left(1+z\right)f_{Z}(z)\,dz.
\end{align}
Applying $\textstyle \int^{\infty}_{0}\log_2\left(1+z\right)f_{Z}(z)\,dz=\frac{1}{\ln2}\textstyle \int^{\infty}_{0}\frac{1-F_{Z}(z)}{1+z}\,dz$, (\ref{eq18}) can be represented as
\begin{align}\label{eq19}
\bar{C}^{\textup{ex}}_{2}
&=\frac{1}{2\ln 2}
\displaystyle \int^{\infty}_{0}\frac{pg}{\left(1+z\right)(p+z)(g+z)}e^{-qz}dz\nonumber \\
=&\frac{p\log_2 e}{2(p-1)}\int_{0}^{\infty}\biggr[(1+z)^{-1}\frac{g}{g+z}-(p+z)^{-1}\frac{g}{g+z}  \biggr]e^{-qz}dz\nonumber \\
=&\frac{p\log_2 e}{2(p-1)}\biggr[ \frac{g}{g-1}\biggl\{-e^q\operatorname{Ei}(-q)+e^{gq}\operatorname{Ei}(-gq)\biggr\}\nonumber \\
&-\frac{g}{g-p}\biggl\{-e^{pq}\operatorname{Ei}(-pq)+e^{gq}\operatorname{Ei}(-gq)\biggr\} \biggr].
\end{align}
Note that (\ref{eq19}) is derived by considering imperfect SIC (i.e., 0$<\kappa_1\leq$1). Therefore, the exact EC of $s_2$ under perfect SIC is derived as follows.

With perfect SIC, $Z\triangleq\text{min}\left(U,V\right)$ can be written as $Z\triangleq\text{min}\left(\phi_2\rho |h_2|^2,V\right)$. The CDF of $Z$ is therefore given by 
\begin{align}\label{eq20}
F_Z\left(z\right) 
&=1-\frac{g}{g+z}e^{-qz},
\end{align}
The exact EC related to $s_2$ under perfect SIC, is derived as
\begin{align}\label{eq21}
\bar{C}^{\textup{ex}}_{2, \textup{p}} &=\frac{1}{2}\displaystyle \int^{\infty}_{0}\log_2\left(1+z\right)f_{Z}(z)\,dz \nonumber\\
&=\frac{\log_2e}{2}
\displaystyle \int^{\infty}_{0}\frac{g}{\left(1+z\right)\left(g+z\right)}e^{-qz}dz\nonumber\\
&=\frac{g\log_2e}{2\left(g-1\right)}
\displaystyle \int^{\infty}_{0}\left(\frac{1}{1+z}-\frac{1}{g+z}\right)e^{-qz}dz\nonumber\\
&=\frac{g\log_2e}{2\left(g-1\right)}
\left\{-e^{q}\text{Ei}\left(-q\right)+e^{gq}\text{Ei}\left(-gq\right)\right\}.
\end{align}
\subsection{Ergodic capacity related to $s_3$}
Let $Y\triangleq\gamma^{t_2}_{s_3}$. So, the CDF of $Y$ is derived as
\begin{align}\label{eq22}
F_Y\left(y\right) &=1-\frac{\vartheta_3 \lambda_{1}}{\vartheta_3 \lambda_{1}+\vartheta_2\kappa_2\lambda_{3}y}e^{-\frac{y}{\vartheta_3\rho\lambda_{1}}}.
\end{align}
So, the exact EC associated with $s_3$ is obtained as 
\begin{align}\label{eq23}
&\bar{C}^{\textup{ex}}_{3} =\text{E}\{C_{3}\} =\frac{1}{2}\displaystyle \int^{\infty}_{0}\log_2\left(1+y\right)f_{Y}(y)\,dy\nonumber\\
&=\frac{\log_2 e}{2}\displaystyle \int^{\infty}_{0} (1+y)^{-1}\frac{\vartheta_3 \lambda_{1}}{\vartheta_3 \lambda_{1}+\vartheta_2\kappa_2\lambda_{3}y}e^{-\frac{y}{\vartheta_3\rho\lambda_{1}}}\,dy \nonumber\\
=&\frac{\log_2e}{2}\biggl[ \frac{\vartheta_3 \lambda_{1}}{\vartheta_3 \lambda_{1}-\vartheta_2\kappa_2\lambda_{3}}  \nonumber \\
&\times\biggl\{ \int_{0}^{\infty} (1+y)^{-1}-\int_{0}^{\infty}\frac{\vartheta_2\kappa_2\lambda_{3}}{\vartheta_3 \lambda_{1}+\vartheta_2\kappa_2\lambda_{3}v}\biggr\}e^{-\frac{y}{\vartheta_3\rho\lambda_{1}}}dy \biggr] \nonumber\\
=&\frac{\log_2e}{2}\biggl[\frac{\vartheta_3 \lambda_{1}}{\vartheta_3 \lambda_{1}-\vartheta_2\kappa_2\lambda_{3}} \times \nonumber \\
&\biggl\{ -e^{\frac{1}{\vartheta_3\rho\lambda_{1}}} \operatorname{Ei} (-\frac{1}{\vartheta_3\rho\lambda_{1}}) +e^{\frac{1}{\vartheta_2\kappa_2\rho\lambda_{3}}} \operatorname{Ei} (-\frac{1}{\vartheta_2\kappa_2\rho\lambda_{3}}) \biggr\}\biggr].
\end{align}
Note that (\ref{eq23}) is derived by considering imperfect SIC (i.e., 0$<\kappa_2\leq$1). The exact EC of $s_3$ under perfect SIC is derived as follows.

With perfect SIC, $Y\triangleq\gamma^{t_2}_{s_3}$ can be written as $Y\triangleq\vartheta_3\rho |h_1|^2$. The CDF of $Y$ is therefore given by 
\begin{align}\label{eq24}
F_Y\left(y\right) &=1-e^{-\frac{y}{\vartheta_3\rho\lambda_{1}}}.
\end{align}
Hence, the exact EC associated with $s_3$ is obtained as 
\begin{align}\label{eq25}
\bar{C}^{\textup{ex}}_{3,\textup{p}}
&=\frac{\log_2e}{2}
\displaystyle \int^{\infty}_{0}\frac{1-F_Y\left(y\right)}{\left(1+y\right)}dy
=-\frac{\log_2e}{2}e^{r}\text{Ei}\left(-r\right),
\end{align}
where $r=\frac{1}{\vartheta_3\rho\lambda_{1}}$.
\subsection{Ergodic sum capacity of DU-CNOMA}
Using (\ref{eq14}), (\ref{eq19}), and (\ref{eq23}), the exact closed-form expression of ESC of the proposed DU-CNOMA protocol under imperfect SIC can be written by
\begin{align}\label{eq26}
\bar{C}^{\textup{prop}}_{\textup{sum, ip}}
&=\bar{C}^{\textup{ex}}_{1}+\bar{C}^{\textup{ex}}_{2}+\bar{C}^{\textup{ex}}_{3}.
\end{align}
Conversely, using (\ref{eq14}), (\ref{eq21}), and (\ref{eq25}), the exact closed-form expression of ESC of the proposed DU-CNOMA protocol under perfect SIC can be written by
\begin{align}\label{eq27}
\bar{C}^{\textup{prop}}_{\textup{sum, p}}
&=\bar{C}^{\textup{ex}}_{1}+\bar{C}^{\textup{ex}}_{2,p}+\bar{C}^{\textup{ex}}_{3,p}.
\end{align}
%
\section{Outage probability analysis}
 According to the required quality of service, $C_{t_1}$, $C_{t_2}$, and $C_{t_3}$ are assumed to be the predetermined target date rate thresholds of the symbols $s_1$, $s_2$, and $s_3$, respectively. The closed-form expressions of outage probabilities related to  $s_1$, $s_2$, and $s_3$ are provided over independent Rayleigh fading channel in the following subsections.
\subsection{Outage probability of symbol $s_1$}
The OP of symbol $s_1$ is given by
\begin{align}\label{eq28}
P_{\textup{out}, s_1} &=P_r\{\gamma^{t_1}_{s_1}<R_{t_1}\}\nonumber\\
&=1-e^{-\frac{R_{t_1}}{\lambda_{1}\rho(\phi_1-\phi_2R_{t_1})}},
\end{align}
where $R_{t_1}=2^{2C_{t1}}-1$ and $\frac{R_{t_1}}{R_{t_1}+1}<\phi_1<1$.
\subsection{Outage probability of symbol $s_2$}
The OP of symbol $s_2$ is given by
\begin{align}\label{eq29}
P_{\textup{out}, s_2} &=1-P_r\{\min(\gamma^{t_1}_{s_2}, \gamma^{t_2}_{s_2})>R_{t_2}\}P_r\{\gamma^{t_1}_{s_1\rightarrow s_2}>R_{t_1}\}\nonumber\\
&=1-\frac{\phi_2\lambda_{2}\vartheta_2\lambda_{3}}{(\phi_2\lambda_{2}+\phi_1\kappa_1\lambda_{2}R_{t_2})(\vartheta_2\lambda_{3}+\vartheta_3\lambda_{1}R_{t_2})}\nonumber \\
&\times e^{-\frac{R_{t_2}}{\phi_2\rho\lambda_{2}}-\frac{R_{t_2}}{\vartheta_2\rho\lambda_{3}}-\frac{R_{t_1}}{\lambda_{2}\rho(\phi_1-\phi_2R_{t_1})}},
\end{align}
where $R_{t_2}=2^{2C_{t2}}-1$. Now, by putting $\kappa_1$=0, the OP of $s_2$ under perfect SIC can be expressed as 
\begin{align}\label{eq30}
P^{\textup{p}}_{\textup{out}, s_2} &=1-\frac{\vartheta_2\lambda_{3}}{(\vartheta_2\lambda_{3}+\vartheta_3\lambda_{1}R_{t_2})}e^{-\frac{R_{t_2}}{\phi_2\rho\lambda_{2}}-\frac{R_{t_2}}{\vartheta_2\rho\lambda_{3}}-\frac{R_{t_1}}{\lambda_{2}\rho(\phi_1-\phi_2R_{t_1})}}
\end{align} 
\subsection{Outage probability of symbol $s_3$}
The OP of symbol $s_3$ is given by
\begin{align}\label{eq31}
P_{\textup{out}, s_3}&=1-\frac{\vartheta_3 \lambda_{1}}{\vartheta_3 \lambda_{1}+\vartheta_2\kappa_2\lambda_{3}R_{t_3}}e^{-\frac{R_{t_3}}{\vartheta_3\rho\lambda_{1}}},
\end{align}
where $R_{t_3}=2^{2C_{t3}}-1$. Now, by putting $\kappa_2$=0, the OP of $s_3$ under perfect SIC can be expressed as 
\begin{align}\label{eq32}
P^{\textup{p}}_{\textup{out}, s_3} &=1-e^{-\frac{R_{t_3}}{\vartheta_3\rho\lambda_{1}}}.
\end{align}
\subsection{Diversity Order Computation}
To investigate more insight into the system
outage performance, this section demonstrates DO for each symbol in the proposed DU-NOMA. By using $\lim\limits_{\rho\rightarrow\infty}-\frac{\log P_{out}}{\log\rho}$~\cite{28} in the high SNR regime, DOs related to each of the symbols can be computed according to the  following Lemma. 
\begin{lemma}
	Consider $\rho\rightarrow\infty$ and $e^{-x}=1-x$ in the high SNR regime. Hence, DOs of $s_1$, $s_2$, and $s_3$ are respectively expressed as
	\begin{align}\label{eqDo}
	D_{s_1}=&1,\nonumber\\
	D_{s_2} =& 
	\begin{cases} 
      0, \textup{imperfect SIC}\\
	  0, \textup{perfect SIC}
	\end{cases}
	\nonumber\\
	D_{s_3} =& 
	\begin{cases} 
	0, \textup{imperfect SIC}\\
	1, \textup{perfect SIC}
	\end{cases}
	\end{align}
\end{lemma}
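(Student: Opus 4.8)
The plan is to compute the diversity order $D_{s_k} = \lim_{\rho\to\infty}-\frac{\log P_{\textup{out},s_k}}{\log\rho}$ for each symbol by extracting the leading-order power of $1/\rho$ in each outage expression derived in \eqref{eq28}--\eqref{eq32}, using the hint $e^{-x}\approx 1-x$ for small $x$.

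For $s_1$, I would start from \eqref{eq28}. Writing $a = \frac{R_{t_1}}{\lambda_1\rho(\phi_1-\phi_2 R_{t_1})}$, which is $\Theta(1/\rho)$ as $\rho\to\infty$, the approximation $e^{-a}\approx 1-a$ gives $P_{\textup{out},s_1}\approx a = \Theta(1/\rho)$, hence $-\log P_{\textup{out},s_1}/\log\rho \to 1$, so $D_{s_1}=1$.

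For $s_2$ under imperfect SIC, I would use \eqref{eq29}. The key observation is that the prefactor $\frac{\phi_2\lambda_2\vartheta_2\lambda_3}{(\phi_2\lambda_2+\phi_1\kappa_1\lambda_2 R_{t_2})(\vartheta_2\lambda_3+\vartheta_3\lambda_1 R_{t_2})}$ does \emph{not} tend to $1$ as $\rho\to\infty$ (it is a constant strictly less than $1$ whenever $\kappa_1>0$), while the exponential factor $\to 1$. Therefore $P_{\textup{out},s_2}$ tends to a nonzero constant, giving $-\log P_{\textup{out},s_2}/\log\rho\to 0$, i.e. $D_{s_2}=0$. Under perfect SIC, \eqref{eq30} still has the prefactor $\frac{\vartheta_2\lambda_3}{\vartheta_2\lambda_3+\vartheta_3\lambda_1 R_{t_2}}<1$ independent of $\rho$, so again $P^{\textup{p}}_{\textup{out},s_2}$ converges to a positive constant and $D_{s_2}=0$. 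The same reasoning disposes of $s_3$ under imperfect SIC: in \eqref{eq31} the factor $\frac{\vartheta_3\lambda_1}{\vartheta_3\lambda_1+\vartheta_2\kappa_2\lambda_3 R_{t_3}}<1$ is $\rho$-independent, so $P_{\textup{out},s_3}\to\text{const}>0$ and $D_{s_3}=0$. Finally, for $s_3$ under perfect SIC I would use \eqref{eq32}: with $b=\frac{R_{t_3}}{\vartheta_3\rho\lambda_1}=\Theta(1/\rho)$, the approximation $e^{-b}\approx 1-b$ yields $P^{\textup{p}}_{\textup{out},s_3}\approx b=\Theta(1/\rho)$, hence $D_{s_3}=1$.

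I do not anticipate a genuine obstacle here, since the argument is an asymptotic bookkeeping exercise once the outage formulas are in hand; the only point requiring a little care is making explicit that a constant prefactor strictly below $1$ forces the diversity order to collapse to zero regardless of how fast the accompanying exponential decays to $1$ — that is, the residual-interference term (governed by $\kappa_1,\kappa_2$, or by the uplink-NOMA self-interference $\vartheta_3\lambda_1$ term even when $\kappa=0$) creates an error floor, and it is this floor, not the exponential, that dictates the high-SNR slope. It may also be worth noting the edge behaviour $\kappa_1\to 0$ in \eqref{eq29}, which continuously recovers \eqref{eq30} but still leaves the $\vartheta_3\lambda_1 R_{t_2}$ term, consistent with $D_{s_2}=0$ in both cases.
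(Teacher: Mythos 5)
Your proposal is correct and follows essentially the same route as the paper: expand $e^{-x}\approx 1-x$ to read off the $\Theta(1/\rho)$ decay for $s_1$ and for $s_3$ under perfect SIC (giving diversity order $1$), and observe that the $\rho$-independent prefactors in \eqref{eq29}--\eqref{eq31} leave a nonzero error floor (giving diversity order $0$) for $s_2$ in both SIC cases and for $s_3$ under imperfect SIC. Your explicit remark that the floor for $s_2$ under perfect SIC comes from the uplink self-interference term $\vartheta_3\lambda_1 R_{t_2}$ rather than from residual SIC error is a worthwhile clarification, but it does not change the argument.
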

\begin{proof}
In high SNR, the OP related to symbol $s_1$ can be approximated as 
\begin{align}\label{eq33}
P^\infty_{\textup{out}, s_1} &=\frac{R_{t_1}}{\lambda_{1}\rho(\phi_1-\phi_2R_{t_1})}\approx\frac{1}{\rho}.
\end{align} 
So, the DO related to symbol $s_1$ is derived as 
\begin{align}\label{eq34}
D_{s_1} &=\lim\limits_{\rho\rightarrow\infty}-\frac{\log 1-\log\rho}{\log\rho}=1.
\end{align} 
Moreover, the OP related to symbol $s_2$ for large $\rho$ under imperfect and perfect SIC can be respectively approximated as
\begin{align}
P^\infty_{\textup{out}, s_2}&=1-\frac{\phi_2\lambda_{2}\vartheta_2\lambda_{3}}{(\phi_2\lambda_{2}+\phi_1\kappa_1\lambda_{2}R_{t_2})(\vartheta_2\lambda_{3}+\vartheta_3\lambda_{1}R_{t_2})},\label{eq35}\\
P^{\textup{p},\infty}_{\textup{out}, s_2} &=1-\frac{\vartheta_2\lambda_{3}}{(\vartheta_2\lambda_{3}+\vartheta_3\lambda_{1}R_{t_2})}.\label{eq36}
\end{align}
Imperfect SIC and inter-symbol interference effects cause the OP related to $s_3$ settling in the high SNR that creates error floor. Using (\ref{eq35}) and (\ref{eq36}), the DO related to symbol $s_2$ under imperfect and perfect SIC respectively can be found as 
\begin{align}
D_{s_2} &=\lim\limits_{\rho\rightarrow\infty}-\frac{\log P^\infty_{\textup{out}, s_2}}{\log\rho}=0,\label{eq37}\\
D^\textup{p}_{s_2} &=\lim\limits_{\rho\rightarrow\infty}-\frac{\log P^{\textup{p},\infty}_{\textup{out}, s_2}}{\log\rho}=0.\label{eq38}
\end{align} 
Lastly, in high SNR, the OP related to symbol $s_3$ under imperfect and perfect SIC can be respectively approximated as
\begin{align}
P^\infty_{\textup{out}, s_3}&=1-\frac{\vartheta_3 \lambda_{1}}{\vartheta_3 \lambda_{1}+\vartheta_2\kappa_2\lambda_{3}R_{t_3}},\label{eq39}\\
P^{\textup{p},\infty}_{\textup{out}, s_3} &=\frac{R_{t_3}}{\vartheta_3\rho\lambda_{1}}\approx\frac{1}{\rho}.\label{eq40}
\end{align}
Therefore, the DO related to $s_3$ under imperfect and perfect SIC can be respectively computed by using (\ref{eq39}) and (\ref{eq40}) as
\begin{align}
D_{s_3} &=\lim\limits_{\rho\rightarrow\infty}-\frac{\log P^\infty_{\textup{out}, s_3}}{\log\rho}=0,\label{eq41}\\
D^\textup{p}_{s_3} &=\lim\limits_{\rho\rightarrow\infty}-\frac{\log P^{\textup{p},\infty}_{\textup{out}, s_3}}{\log\rho}=\lim\limits_{\rho\rightarrow\infty}-\frac{\log 1-\log\rho}{\log\rho}=1.\label{eq42}
\end{align}
The DO related to $s_3$ under realistic imperfect SIC assumption becomes zero whereas it becomes one under perfect SIC assumption.  
\end{proof} 
\section{Outage capacity analysis}
This section presents analytical derivation of OC for each symbol in the proposed DU-CNOMA over independent Rayleigh fading channels. The OC is defined	as the data rate that can be attained if a system is allowed to be in outage with probability $\Upsilon$. For wireless environments with deep fading situation, it is a critical performance metric. The OC of each symbol can be derived from their corresponding OP by using \cite[eq. (2.68)]{29} as shown in the following subsections.
\subsection{Outage capacity of $s_1$}
The OC $C_{t_1}$ related to $s_1$, with specified OP $\Upsilon_1$ can be computed from (\ref{eq28}) as
\begin{align}\label{eq43}
\Upsilon_1=1-e^{-\frac{R_{t_1}}{\lambda_{1}\rho(\phi_1-\phi_2R_{t_1})}}\nonumber\\
e^{-\frac{R_{t_1}}{\lambda_{1}\rho(\phi_1-\phi_2R_{t_1})}}=1-\Upsilon_1 \nonumber\\
-\frac{R_{t_1}}{\lambda_{1}\rho(\phi_1-\phi_2R_{t_1})}=\ln (1-\Upsilon_1) \nonumber\\
\{(\lambda_{1}\phi_2\rho \ln(1-\Upsilon_1)-1\}R_{t_1}=\lambda_{1}\phi_1\rho
 \ln(1-\Upsilon_1)\nonumber\\
 2^{2C_{t1}}-1=\frac{\lambda_{1}\phi_1\rho
 	\ln(1-\Upsilon_1)}{(\lambda_{1}\phi_2\rho \ln(1-\Upsilon_1)-1}\nonumber\\
 C_{t1}=\frac{1}{2}\log_2\biggl(1+\frac{\lambda_{1}\phi_1\rho
 	\ln(1-\Upsilon_1)}{(\lambda_{1}\phi_2\rho \ln(1-\Upsilon_1)-1}\biggr).
\end{align}
\subsection{Outage capacity of $s_2$}
Using $e^x\approx 1+x$ at high $\rho$, the OC $C_{t_2}$ related to $s_2$, with specified OP $\Upsilon_2$ can be computed from (\ref{eq29}) as
\begin{align}\label{eq44}
\Upsilon_2 &=1-\frac{\phi_2\lambda_{2}\vartheta_2\lambda_{3}}{(\phi_2\lambda_{2}+\phi_1\kappa_1\lambda_{2}R_{t_2})(\vartheta_2\lambda_{3}+\vartheta_3\lambda_{1}R_{t_2})}\nonumber \\
&\times \biggl(1-\frac{R_{t_2}}{\phi_2\rho\lambda_{2}}-\frac{R_{t_2}}{\vartheta_2\rho\lambda_{3}}\biggr)\nonumber\\
\Upsilon_2 &=1-\frac{GH}{(G+IR_{t_2})(H+JR_{t_2})}\biggl(1-\frac{R_{t_2}}{G\rho}-\frac{R_{t_2}}{H\rho}\biggr),
\end{align}
conditioned on $\gamma^{t_1}_{s_1\rightarrow s_2}>R_{t_1}$, where $G=\phi_2\lambda_{2}$, $H=\vartheta_2\lambda_{3}$, $I=\phi_1\kappa_1\lambda_{2}$, and $J=\vartheta_3\lambda_{1}$. After some algebraic simplifications, (\ref{eq44}) can be rewritten as
\begin{align}\label{eq45}
IJ\rho\left(1-\Upsilon_2\right)R_{t_2}^2+\left\{\left(GJ+HI\right)\left(1-\Upsilon_2\right)\rho+H+G\right\}R_{t_2}
+\left(-GH\rho\Upsilon_2\right)=&0\nonumber\\
KR_{t_2}^2+LR_{t_2}+M=&0
\end{align}
where $K=IJ\rho\left(1-\Upsilon_2\right)$, $L=\left(GJ+HI\right)\left(1-\Upsilon_2\right)\rho+H+G$, $M=-GH\rho\Upsilon_2$ are assumed. Solving (\ref{eq45}) and considering feasible root, $C_{t_2}$ can be obtained as 
\begin{align}\label{eq46}
R_{t_2}&=\frac{-L+\sqrt{L^2-4KM}}{2K}\nonumber\\
2^{2C_{t_2}}-1&=\frac{-L+\sqrt{L^2-4KM}}{2K}\nonumber\\
C_{t_2}&=\frac{1}{2}\text{log}_2\left(1+\frac{-L+\sqrt{L^2-4KM}}{2K}\right)
\end{align}

On the other hand, the OC $C_{t_2}$ under perfect SIC can be computed from (\ref{eq30}) as
\begin{align}\label{eq47}
\Upsilon_2 &=1-\frac{\vartheta_2\lambda_{3}}{(\vartheta_2\lambda_{3}+\vartheta_3\lambda_{1}R_{t_2})}{\biggl(1-\frac{R_{t_2}}{\phi_2\rho\lambda_{2}}-\frac{R_{t_2}}{\vartheta_2\rho\lambda_{3}}\biggr)}\nonumber\\
&=1-\frac{H}{(H+JR_{t_2})}\biggl(1-\frac{R_{t_2}}{G\rho}-\frac{R_{t_2}}{H\rho}\biggr)\nonumber\\
R_{t_2}&=\frac{GH\rho\Upsilon_2}{GJ\rho\left(1-\Upsilon_2\right)+G+H}\nonumber\\
C_{t_2}&=\frac{1}{2}\text{log}_2\left(1+\frac{GH\rho\Upsilon_2}{GJ\rho\left(1-\Upsilon_2\right)+G+H}\right)
\end{align} 
\subsection{Outage capacity of $s_3$}
Using $e^x\approx 1+x$ at high $\rho$, the OC $C_{t_3}$ related to $s_3$, with specified OP $\Upsilon_3$ can be computed from (\ref{eq31}) as
\begin{align}\label{eq48}
\Upsilon_3&=1-\frac{\vartheta_3 \lambda_{1}}{\vartheta_3 \lambda_{1}+\vartheta_2\kappa_2\lambda_{3}R_{t_3}}{(1-\frac{R_{t_3}}{\vartheta_3\rho\lambda_{1}})}\nonumber\\
\Upsilon_3&=1-\frac{\vartheta_3 \lambda_{1}(\vartheta_3\rho\lambda_{1}-R_{t_3})}{(\vartheta_3 \lambda_{1}+\vartheta_2\kappa_2\lambda_{3}R_{t_3})\vartheta_3\rho\lambda_{1}}\nonumber\\
\Upsilon_3&=\frac{\vartheta_2\kappa_2\lambda_{3}R_{t_3}\rho+R_{t_3}}{(\vartheta_3 \lambda_{1}+\vartheta_2\kappa_2\lambda_{3}R_{t_3})\rho}\nonumber\\
R_{t_3}&=\frac{\vartheta_3 \lambda_{1}\rho \Upsilon_3}{1+\vartheta_2\kappa_2\lambda_{3}\rho-\vartheta_2\kappa_2\lambda_{3}\rho\Upsilon_3}\nonumber\\
C_{t_3}&=\frac{1}{2}\log_2\biggl(1+\frac{\vartheta_3 \lambda_{1}\rho \Upsilon_3}{1+\vartheta_2\kappa_2\lambda_{3}\rho(1-\Upsilon_3)}\biggr)
\end{align}
On the other hand, the OC $C_{t_3}$ under perfect SIC can be computed from (\ref{eq32}) as
\begin{align}\label{eq49}
\Upsilon_3 &=1-e^{-\frac{R_{t_3}}{\vartheta_3\rho\lambda_{1}}}\nonumber \\
e^{-\frac{R_{t_3}}{\vartheta_3\rho\lambda_{1}}}&=1-\Upsilon_3\nonumber \\
R_{t_3}&=-\vartheta_3\rho\lambda_{1}\ln (1-\Upsilon_3)\nonumber\\
2^{2C_{t3}}&=1-\vartheta_3\rho\lambda_{1}\ln (1-\Upsilon_3)\nonumber\\
C_{t3}&=\frac{1}{2}\log_2(1-\vartheta_3\rho\lambda_{1}\ln (1-\Upsilon_3))
\end{align}
\subsection{Outage sum capacity}
Using (\ref{eq43}), (\ref{eq46}), and (\ref{eq48}), the OSC of the proposed DU-CNOMA under imperfect SIC is given by 
\begin{align}\label{eq50}
C^{\textup{out}}_{\textup{sum, ip}}&= (\ref{eq43})+ (\ref{eq46})+ (\ref{eq48}).
\end{align}
Conversely, Using (\ref{eq43}), (\ref{eq47}), and (\ref{eq49}), the OSC of the proposed DU-CNOMA under perfect SIC is given by 
\begin{align}\label{eq51}
C^{\textup{out}}_{\textup{sum, p}}&= (\ref{eq43})+ (\ref{eq47})+ (\ref{eq49}).
\end{align}
\begin{figure}[t]
	\centering 
	\includegraphics[width=3.5in,height=3.5in,keepaspectratio]{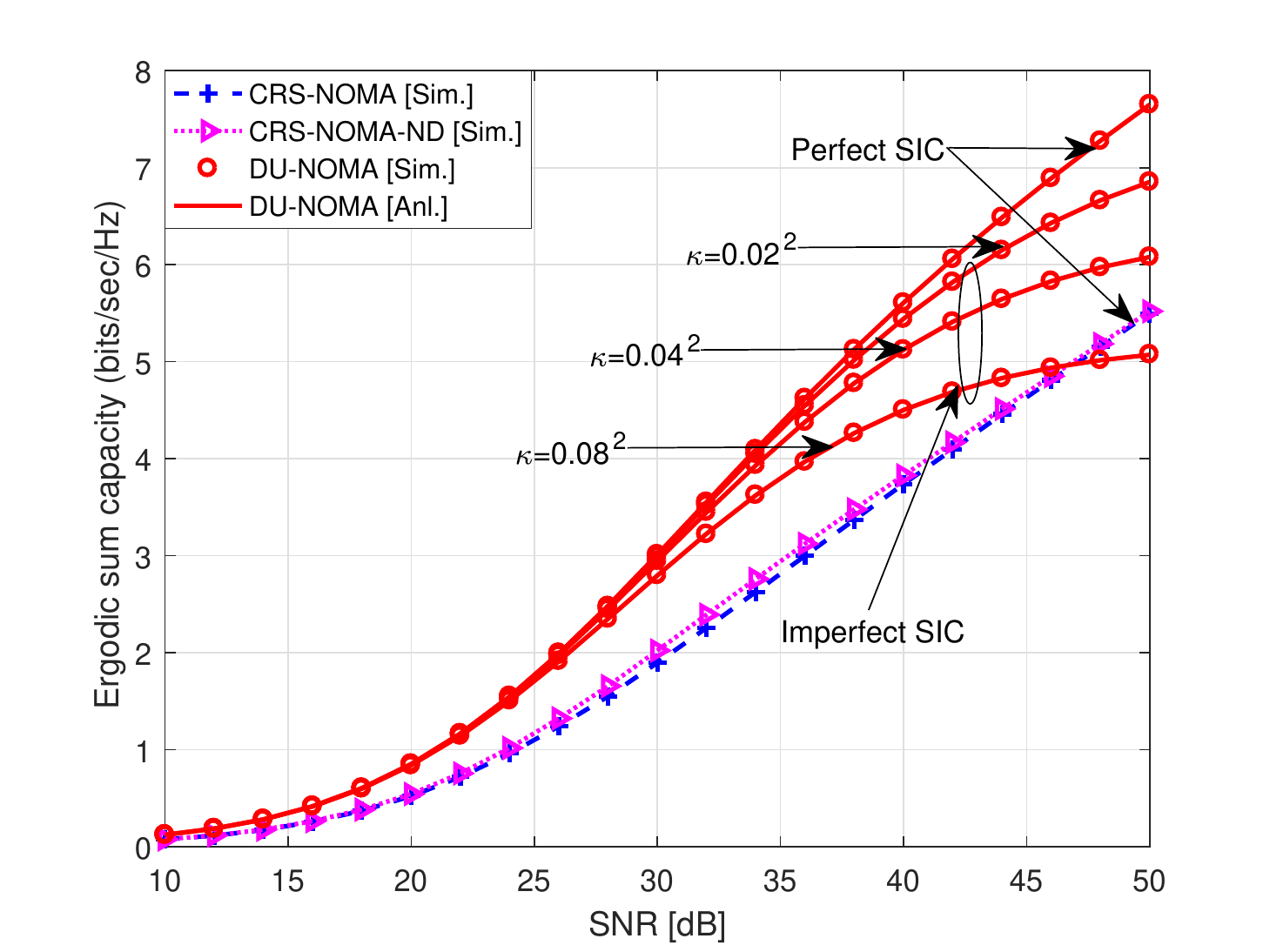}
	\caption{ESC comparison between CRS-NOMA~\cite{19}, CRS-NOMA-ND~\cite{21}, and proposed DU-CNOMA w.r.t SNR $\rho$. $\phi_1$=0.9, $\phi_2$=0.1, $\vartheta_2$=1, and $\vartheta_3$=0.7.}
	\label{Fig2}
\end{figure}
\begin{figure}[H]
	\centering
	\includegraphics[width=3.5in,height=3.5in,keepaspectratio]{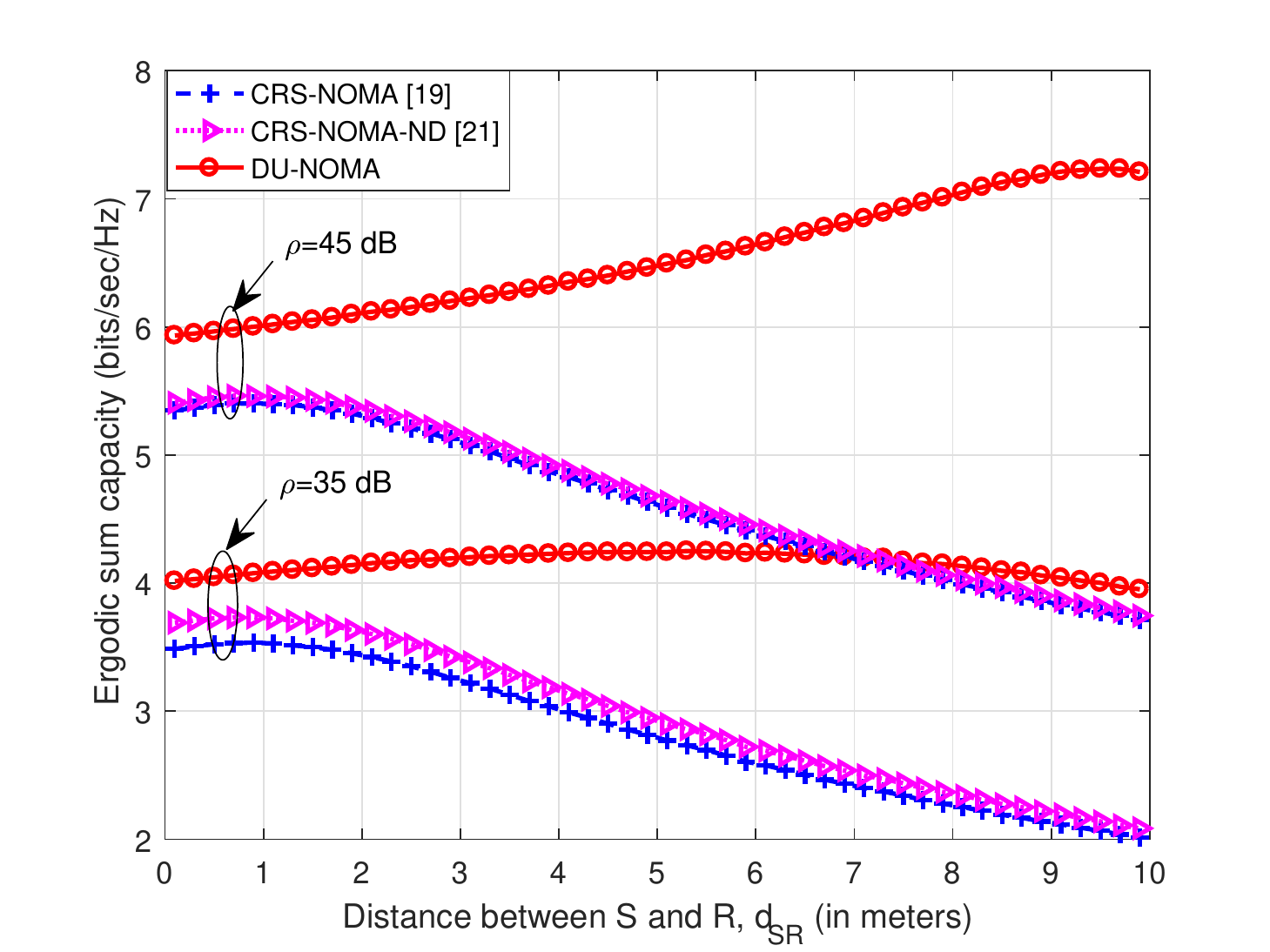}
	\caption{ESC comparison between CRS-NOMA~\cite{19}, CRS-NOMA-ND~\cite{21}, and proposed DU-CNOMA w.r.t distance (in meters) from S to R, $d_{SR}$ under perfect SIC. $\phi_1$=0.9, $\phi_2$=0.1, $\vartheta_2$=1, and $\vartheta_3$=0.7.}
	\label{Fig3}
\end{figure}
\section{Energy Efficiency}
To study the performance of future wireless networks, EE can be another important performance metric. Hence, in this section, we demonstrate EE of the proposed DU-NOMA protocol. The expression of EE $\eta$ can be written by
\begin{align}\label{ee1}
\eta=\frac{\textup{Total data rate of the NOMA system}}{\textup{Total energy consumption}}.
\end{align}
Now, using (\ref{eq26}) and (\ref{eq27}), EE of DU-NOMA considering ESC or delay-tolerant transmission mode under imperfect SIC and perfect SIC can be respectively given by~\cite{26}
\begin{align}\label{ee2}
\eta^{\textup{ESC}}_{\textup{DU-NOMA, ip}}=&\frac{\bar{C}^{\textup{prop}}_{\textup{sum, ip}}}{\frac{T}{2}P_s+\frac{T}{2}P_r}=\frac{2\bar{C}^{\textup{prop}}_{\textup{sum, ip}}}{TP_s+TP_r},\nonumber\\
\eta^{\textup{ESC}}_{\textup{DU-NOMA, p}}=&\frac{\bar{C}^{\textup{prop}}_{\textup{sum, p}}}{\frac{T}{2}P_s+\frac{T}{2}P_r}=\frac{2\bar{C}^{\textup{prop}}_{\textup{sum, p}}}{TP_s+TP_r}.
\end{align}
where $T$ represents the time of a complete transmission. Similarly, using (\ref{eq50}) and (\ref{eq51}), EE of DU-NOMA considering OSC under imperfect SIC and perfect SIC can be respectively given by
\begin{align}\label{ee3}
\eta^{\textup{OSC}}_{\textup{DU-NOMA, ip}}=&\frac{2C^{\textup{out}}_{\textup{sum, ip}}}{TP_s+TP_r},\nonumber\\
\eta^{\textup{OSC}}_{\textup{DU-NOMA, p}}=&\frac{2C^{\textup{out}}_{\textup{sum, p}}}{TP_s+TP_r}.
\end{align}
Moreover, for the purpose of comparison, EE of CRS-NOMA and CRS-NOMA-ND under perfect SIC can also be investigated using \cite[eq. (10)]{19} and \cite[eq. (20)]{21} into (56), respectively considering delay-tolerant transmission mode.
\begin{figure}[t]
	\centering
	\includegraphics[width=3.5in,height=3.5in,keepaspectratio]{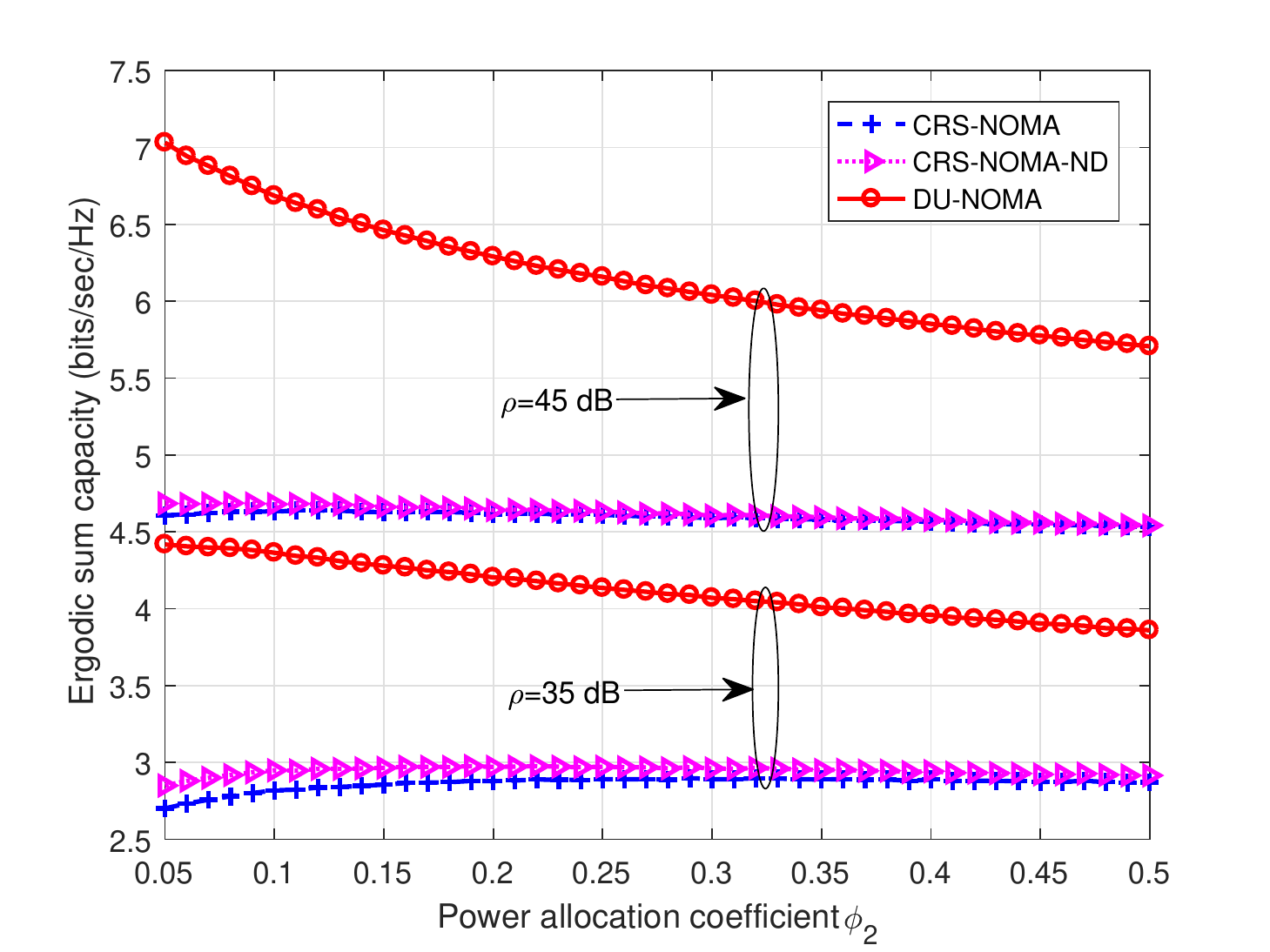}
	\caption{ESC comparison between CRS-NOMA~\cite{19}, CRS-NOMA-ND~\cite{21}, and proposed DU-CNOMA w.r.t power allocation coefficient $\phi_2$ under perfect SIC. $\vartheta_2$=1 and $\vartheta_3$=0.7.}
	\label{Fig4}
\end{figure}
\section{Numerical results and discussions}
This section presents simulation (Sim.) and analytical (Anl.) results of our proposed DU-CNOMA protocol. In each case, analytical result matches well with simulation result and it confirms the correctness of the author's analysis presented here. For comparison purpose, the simulation results for CRS-NOMA~\cite{19} and CRS-NOMA-ND~\cite{21} are also presented. It should be mentioned that analytical derivations for OP and OSC are not provided in~{\cite{19, 21}}. Throughout the simulation, it is assumed that $\nu$=2, $d_{SD}$=10 meters, $d_{SR}=d_{SD}/2$, $d_{RD}=1-d_{SR}$, $\phi_1$=0.9, $\phi_2$=0.1, $\Upsilon_1=\Upsilon_2=\Upsilon_3=\Upsilon$, $\vartheta_2$=1, $\vartheta_3$=0.7, and $\kappa_1=\kappa_2=\kappa$, unless otherwise specified. Note that fixed power allocation method as in~\cite{19, 20, 21} is assumed for the proposed protocol.

It should be mentioned that the distance used in the simulations is arbitrary. Although, we have assumed that the distance between S and D is 10 m, the proposed protocol is viable for any distance between S and D by tuning different parameters such as transmit power, bandwidth, path loss exponent etc. Furthermore, to meet the demanded traffic in 5G or future wireless systems, network densification by the deployment of ultra-dense small cells is one of the most effective techniques. Therefore, small cells with inter-site distance of 5, 10, 20 m etc. are expected to be common in future wireless systems~\cite{30, 31}.
\begin{figure}[t]
	\centering
	\includegraphics[width=3.5in,height=3.5in,keepaspectratio]{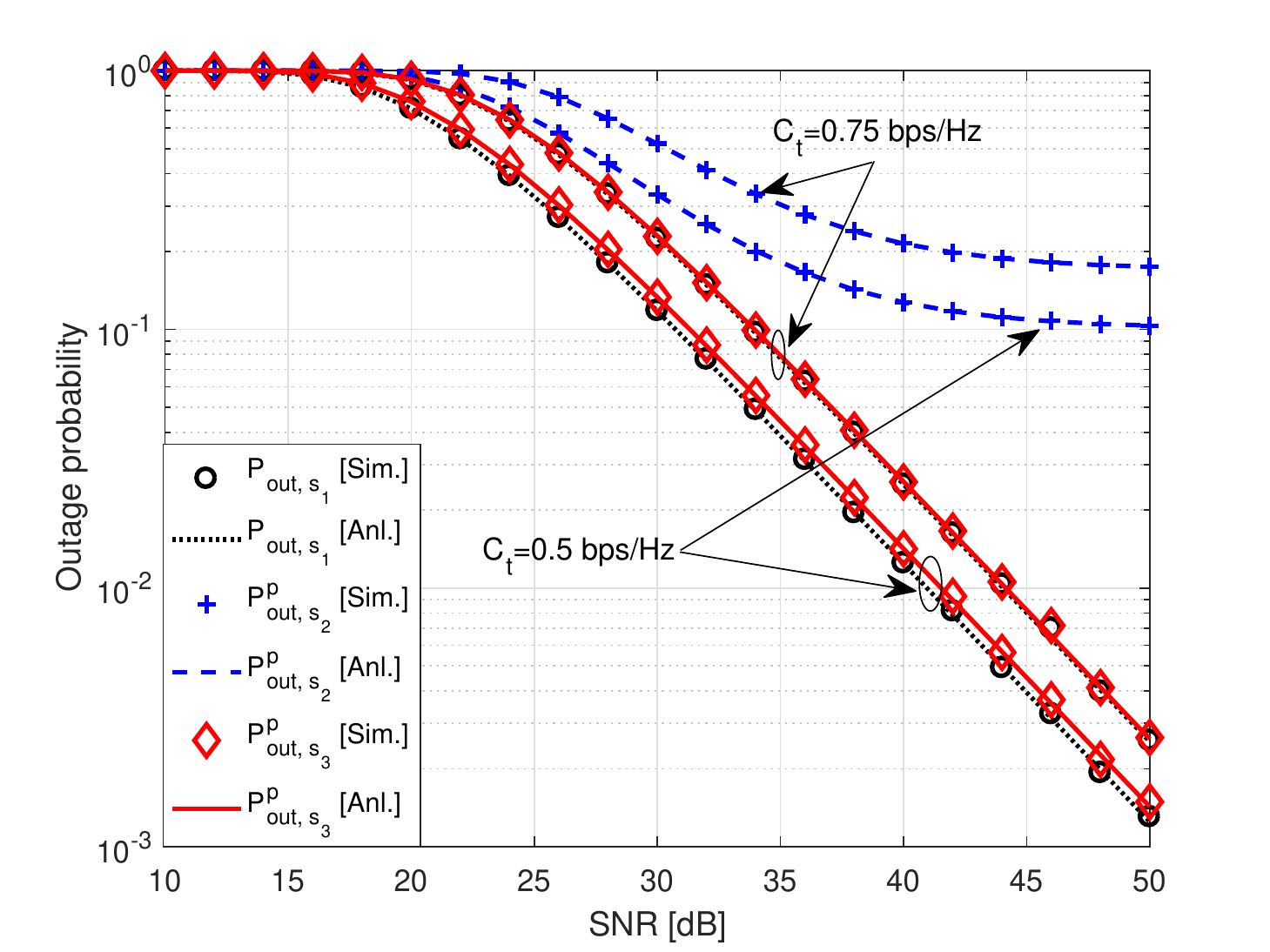}
	\caption{OP of the proposed DU-CNOMA w.r.t SNR $\rho$ under perfect SIC. $\phi_1$=0.9, $\phi_2$=0.1, $\vartheta_2$=1, and $\vartheta_3$=0.7.}
	\label{Fig5}
\end{figure}
\begin{figure}[H]
	\centering
	\includegraphics[width=3.5in,height=3.5in,keepaspectratio]{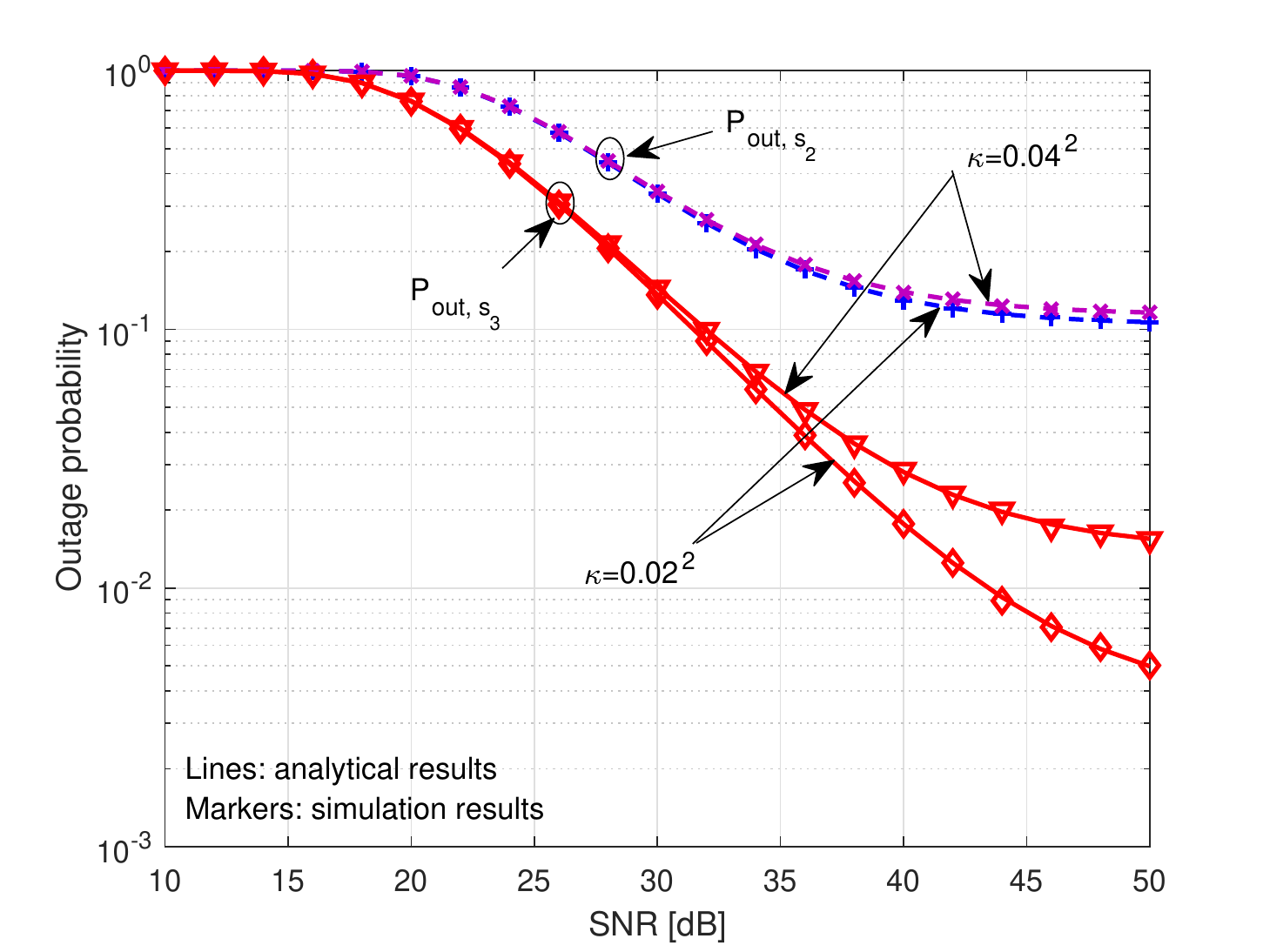}
	\caption{OP of the proposed DU-CNOMA w.r.t SNR $\rho$ under imperfect SIC. $\phi_1$=0.9, $\phi_2$=0.1, $\vartheta_2$=1, $\vartheta_3$=0.7, and $C_t$=0.5 bps/Hz .}
	\label{Fig6}
\end{figure}
\subsection{Ergodic capacity}
ESC versus SNR behavior of DU-CNOMA, CRS-NOMA, and CRS-NOMA-ND is shown in Fig. \ref{Fig2}. Performance of the proposed DU-CNOMA is executed under two conditions, i.e., perfect SIC and imperfect SIC. Note that only perfect SIC is considered in CRS-NOMA and CRS-NOMA-ND. For the case of perfect SIC, it is observed from the figure that DU-CNOMA significantly outperforms all other existing protocols. However, with the increasing amount of residual interference the performance of DU-CNOMA starts degrading which causes it to exhibit a saturated value at high $\rho$ values. For example, the performance of the proposed protocol becomes worse for $\kappa = 0.04^2$ than $\kappa = 0.02^2$. Therefore, at high $\rho$, the adverse impact of residual interference on DU-CNOMA causes it to achieve less ESC than existing methods. Therefore, it is suggested that an efficient interference cancellation technique can significantly improve the performance of DU-NOMA, particularly at medium to high $\rho$. Lastly, strong agreement between simulation and analytical results verifies the appropriateness of the ESC analysis. 
\par
ESC behavior for varying relay position between source and destination, $d_{\textup{SR}}$ (in meters) is demonstrated in Fig. \ref{Fig3}, under perfect SIC. ESC versus $d_{\textup{SR}}$ performance of DU-CNOMA is compared with CRS-NOMA and CRS-NOMA-ND protocols for two different $\rho$ values, i.e., $\rho$ = 35 and 45 dBs. For both cases, proposed protocol achieves better ESC than existing protocols irrespective of the relay position. In addition, ESC of DU-CNOMA becomes far better than others for the increasing distance between source and relay. However, the ESC of DU-CNOMA becomes maximum at a certain $d_{\textup{SR}}$. For example, this behavior is bounded by a maximum $d_{\textup{SR}}$ value (e.g., around $d_{\textup{SR}}$ = 5 for $\rho$ = 35 dB and around $d_{\textup{SR}}$ = 9.5 for $\rho$ = 45 dB).
\par 
ESC with respect to (w.r.t) the power allocation coefficient $\phi_2$ is shown in Fig. \ref{Fig4} for two cases of $\rho$, where $\rho$ = 35 dB and $\rho$ = 45 dB. It is demonstrated that ESC performance of DU-NOMA degrades with the increase of $\phi_2$, whereas $\phi_2$ has a slight impact of the performance of CRS-NOMA and CRS-NOMA-ND. Further, ESC of the proposed DU-CNOMA protocol is higher than existing protocols for all feasible values of $\phi_2$. It is also clear from the figure that ESC of the proposed protocol is higher for $\rho$ = 45 dB than 35 dB.
\begin{figure}[t]
	\centering
	\includegraphics[width=3.5in,height=3.5in,keepaspectratio]{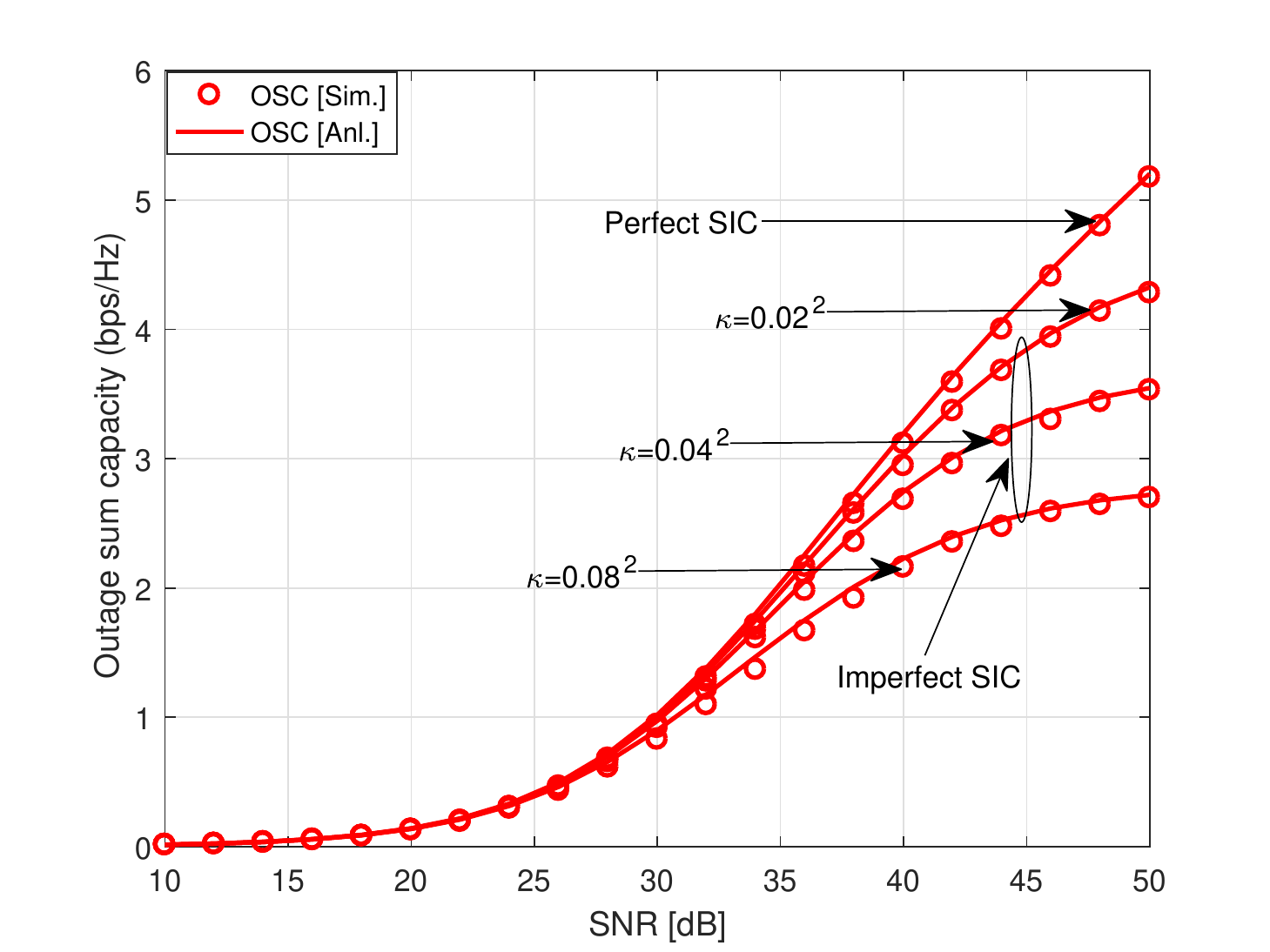}
	\caption{OSC of the proposed DU-CNOMA w.r.t SNR $\rho$. $\phi_1$=0.9, $\phi_2$=0.1, $\vartheta_2$=1, $\vartheta_3$=0.7, and $\Upsilon$=0.1.}
	\label{Fig7}
\end{figure}
\begin{figure}[H]
	\centering
	\includegraphics[width=3.5in,height=3.5in,keepaspectratio]{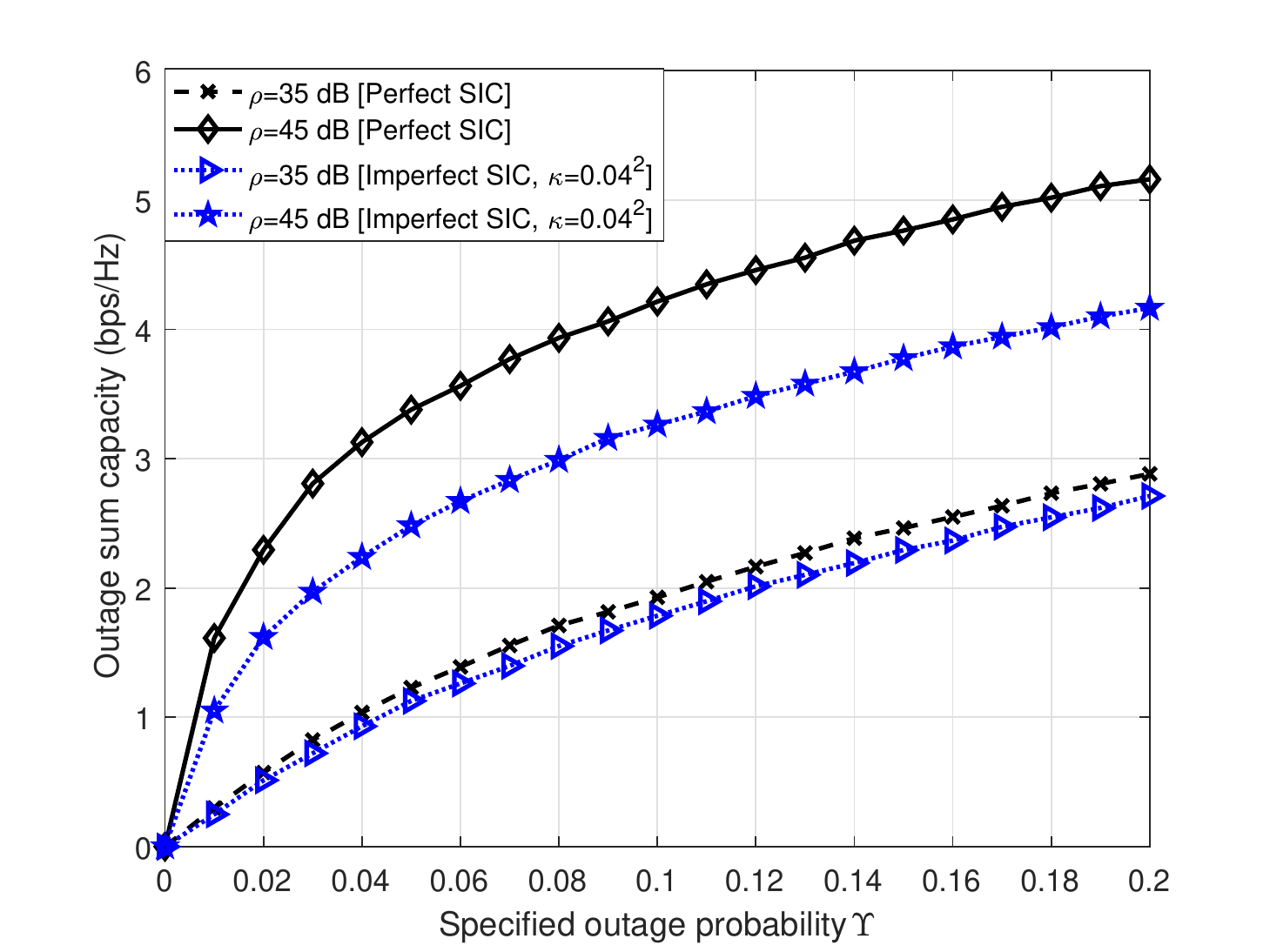}
	\caption{OSC of the proposed DU-CNOMA w.r.t specified outage probability $\Upsilon$ under perfect SIC. $\phi_1$=0.9, $\phi_2$=0.1, $\vartheta_2$=1, and $\vartheta_3$=0.7.}
	\label{Fig8}
\end{figure}    
\subsection{Outage probability}
Let $C_{t_1}=C_{t_2}=C_{t_3}=C_t$. OP versus SNR performance of the proposed protocol is demonstrated in Fig. \ref{Fig5} for two different threshold values of target data rate, i.e., $C_\textup{t}$ = 0.5 bps/Hz and 0.75 bps/Hz. Perfect SIC is considered for analyzing all analytical and simulation results. Coincidence of analytical and simulation results for each case refers to the accuracy of OP analysis. It is pointed out that OP becomes better with the increase of SNR. The OPs related to $s_1$ and $s_3$ show better than $s_2$ for a specific $C_t$ when $\rho$ ranges from medium to high. Though OPs related to symbols $s_1$ and $s_3$ decrease linearly with the increase of $\rho$ after a certain $\rho$, OP related to symbol $s_2$ tends to be saturated for high $\rho$ range. The OPs related to $s_1$, $s_2$, and $s_3$ for $C_\textup{t}$ = 0.75 bps/Hz is higher than $C_\textup{t}$ = 0.5 bps/Hz, as expected.
\par 
By considering imperfect SIC and target data rate $C_\textup{t}$ = 0.5 bps/Hz, OP of DU-CNOMA protocol w.r.t SNR $\rho$ is depicted in Fig. \ref{Fig6}. Only OP versus SNR analysis related to s$_2$ and $s_3$ are compared as the performance related to $s_1$ is not affected by imperfect SIC condition. OP related to $s_3$ is better than $s_2$ at any value of $\rho$ for the considered parameters. OP related to any of the symbols is less for small residual interference (i.e., $\kappa= 0.02^2$) than comparatively large amount of residual interference (i.e., $\kappa = 0.04^2$). Though OP related to $s_3$ shows linear behavior even at high $\rho$ as shown in Fig. \ref{Fig5}, it tends to be saturated at high $\rho$ as shown in Fig. \ref{Fig6} due to the impact of residual interference. 
\begin{figure}[t]
	\centering
	\includegraphics[width=3.5in,height=3.5in,keepaspectratio]{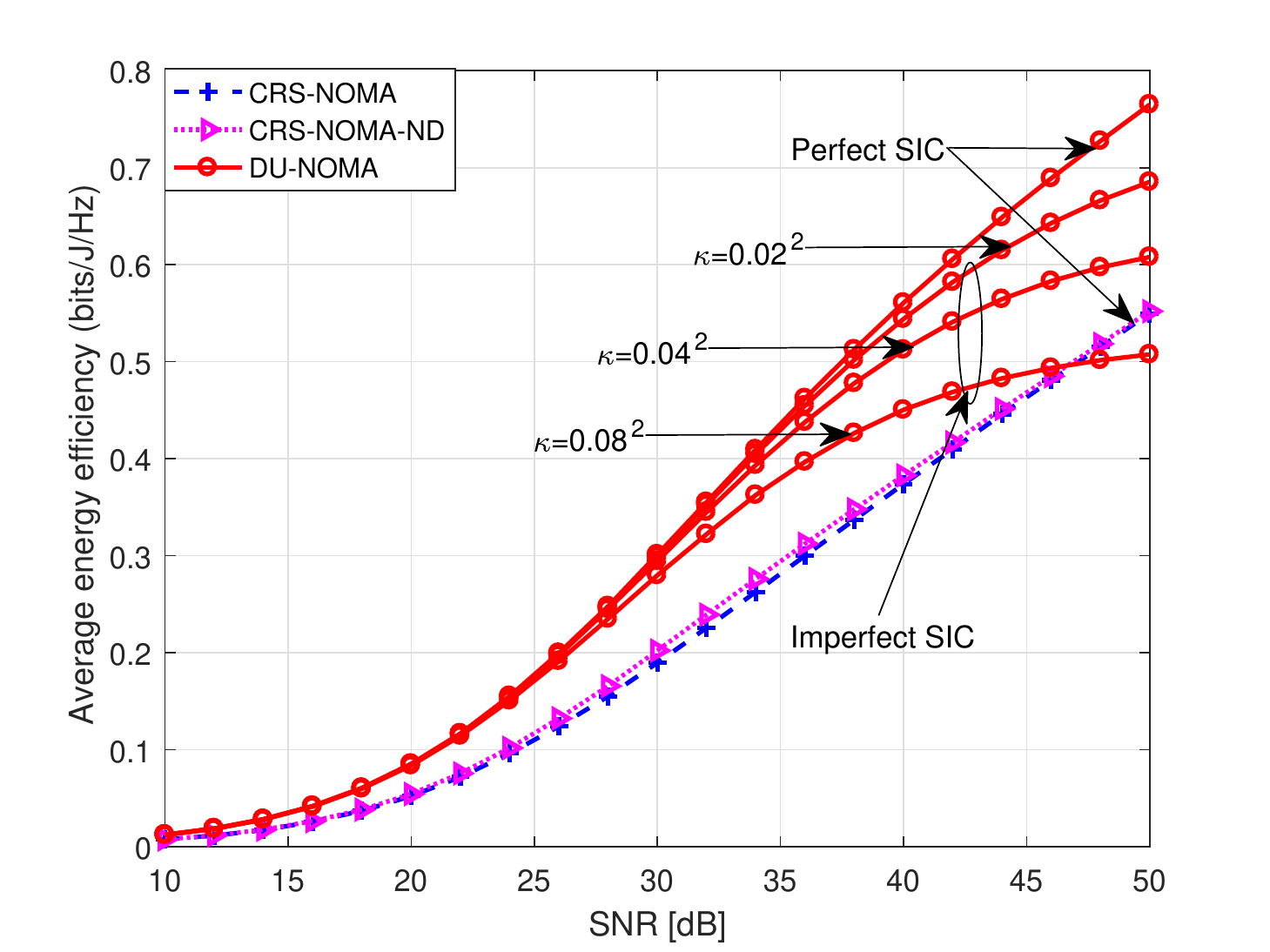}
	\caption{EE comparison between CRS-NOMA~\cite{19}, CRS-NOMA-ND~\cite{21}, and proposed DU-CNOMA w.r.t SNR $\rho$ considering ergodic rate or delay-tolerant transmission. $\phi_1$=0.9, $\phi_2$=0.1, $\vartheta_2$=1, $\vartheta_3$=0.7, T=1 and $P_s=P_r=10$ W.}
	\label{Fig10}
\end{figure}
\subsection{Outage capacity}
10\% OSC of the proposed DU-CNOMA protocol w.r.t SNR $\rho$ is plotted under both perfect and imperfect SIC conditions in Fig. \ref{Fig7}. Three cases of imperfect SIC is considered, i.e., $\kappa= 0.02^2$, $\kappa= 0.04^2$ and $\kappa= 0.08^2$. For perfect SIC condition, OSC of the system linearly increases with the betterment of $\rho$ and maintains it till the high $\rho$. But, for imperfect SIC condition, OSC of the system tends to be saturated at high $\rho$ due to the impact of residual interference. If the effect of residual interference increases, OSC of DU-CNOMA decreases and tends to be saturated at a less value of $\rho$ than for comparatively small residual interference impact.
\par 
OSC behavior w.r.t specified outage probability $\Upsilon$ for the proposed DU-CNOMA protocol is demonstrated in Fig. \ref{Fig8}. Both perfect and imperfect SIC conditions are taken into account and the performance behavior is observed for two different values of $\rho$, i.e., $\rho$ = 35 dB and 45 dB. Fig. \ref{Fig8} depicts that OSC of the system increases with the increase of specified outage probability $\Upsilon$. In addition, OSC goes high for higher $\rho$ (= 45 dB) than lower $\rho$ (= 35 dB). It is also clear that OSC of DU-NOMA shows better for perfect SIC case than imperfect SIC case for a specific value of $\rho$. 
\begin{figure}[H]
	\centering
	\includegraphics[width=3.5in,height=3.5in,keepaspectratio]{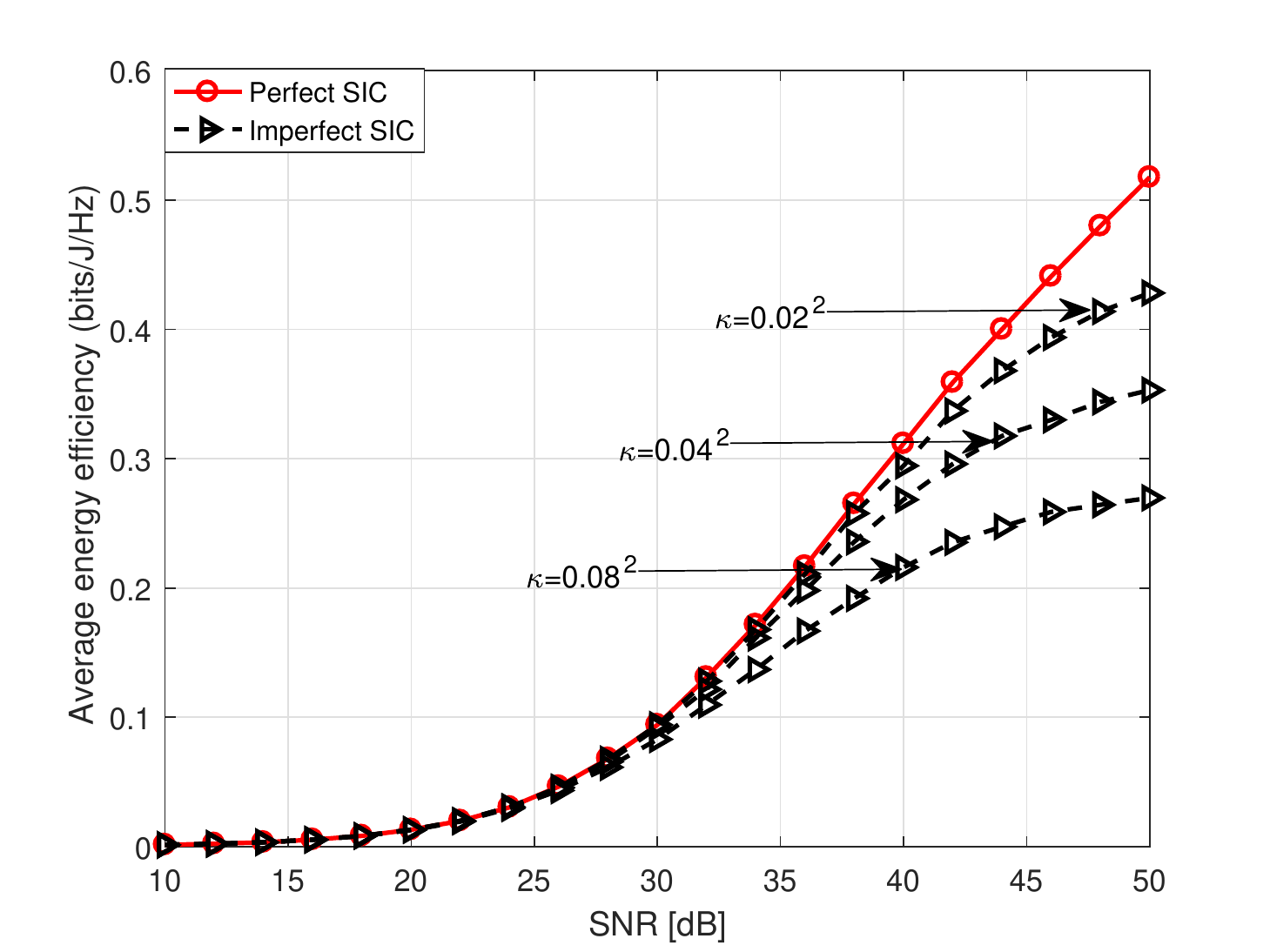}
	\caption{EE of the proposed DU-CNOMA w.r.t SNR $\rho$ considering OSC. $\phi_1$=0.9, $\phi_2$=0.1, $\vartheta_2$=1, $\vartheta_3$=0.7, T=1 and $P_s=P_r=10$ W.}
	\label{Fig11}
\end{figure} 
\subsection{Energy efficiency} 
Fig. 10 shows EE comparison between DU-NOMA, CRS-NOMA, and CRS-NOMA-ND w.r.t. SNR considering delay-tolerant transmission mode or ergodic rate. It is observed that DU-NOMA shows significantly better performance than CRS-NOMA and CRS-NOMA-ND, in terms of EE under perfect SIC. However, under imperfect SIC, the performance gain of DU-NOMA over others depend on the level of residual interference, particularly at high $\rho$. Finally, Fig. 11 shows EE of the proposed DU-NOMA versus SNR considering outage sum rate under both perfect and imperfect SIC. From both figures, it is clear that perfect SIC case outperforms imperfect SIC case, particularly at medium to high $\rho$.
\subsection{Comparative study}
Finally, a comparative study between the proposed DU-NOMA, CRS-NOMA~\cite{19}, and CRS-NOMA-ND~\cite{21} is summarized in Table 1. 
\section{Conclusion and future works}
A cooperative decode-and-forward relaying strategy using the concept of downlink and uplink NOMA has proposed and analyzed in this paper. Under both perfect and imperfect SIC, the performance of the proposed protocol has studied comprehensively, in terms of ESC, OP, EE and OSC over independent Rayleigh fading channels. The closed-form expressions of these system parameters have derived and validated by computer simulation. To get insight into the systems's outage performance, DO for each symbol is also computed. It has shown that the proposed protocol significantly outperforms CRS-NOMA and CRS-NOMA-ND under perfect SIC, whereas under imperfect SIC, performance gains depends on the level of residual interference, particularly at high SNR. Furthermore, hybrid downlink-uplink NOMA for multi-input multi-output systems will be investigated in future works.
\begin{table}[H]
		\caption{A comparative study between the proposed DU-NOMA, CRS-NOMA~\cite{19}, and CRS-NOMA-ND~\cite{21}.}
	\centering 
	\begin{tabular}{ |p{1.8cm} | p{6cm} |  p{6cm} |}
		\hline
		Item & CRS-NOMA~\cite{19} and  CRS-NOMA-ND~\cite{21} & Proposed DU-NOMA \\ \hline
		System Model & A CRS-NOMA consisting of a S, a R and a D is proposed, where direct link between S and D is active.
		\begin{itemize}
			\item A downlink NOMA is exploited in Phase-1 only. In phase-2, only R retransmits the decoded symbol to D.
			\item In contrast to CRS-NOMA~\cite{19}, D uses MRC and another SIC to jointly decode transmitted symbols by S in CRS-NOMA-ND~\cite{21}.
			\item Two symbols are transmitted during two phases.
		\end{itemize}
		 & A different relaying strategy using NOMA (termed as DU-NOMA) is proposed and investigated to further improve the performance of CRS-NOMA system as compared to~\cite{19, 21}.  
		 \begin{itemize}
		 	\item The concept of downlink NOMA is exploited in phase-1, whereas the concept of uplink NOMA is applied in phase-2.
		 	\item Three symbols are transmitted during two phases.
		 \end{itemize}		 	
		 	\\ \hline
		Performance Metrics & 
		\begin{itemize}
			\item ESC
			\item Though simulation results for OP is also shown in~\cite{21},  \cite{21} did not provide any closed-form expressions for OP.
		\end{itemize}
	 & \begin{itemize}
	 	\item ESC 
	 	\item OP 
	 	\item OSC
	 	\item DO and
	 	\item EE
	 \end{itemize} \\ \hline
		Analytical derivations & Analytical derivation for ESC is provided. & Analytical derivations for ESC, OP, OSC, DO, and EE are provided. \\
		\hline
		Numerical results & Simulation results show only ESC in~\cite{19}, whereas ESC and OP in~\cite{21}. & Simulation results show ESC, OP, OSC, and EE.\\
		\hline
		Outcomes & CRS-NOMA~\cite{19} can achieve more spectral efficiency than the conventional CRS~\cite{3} when the SNR is high, and the average channel power of the S-to-R link is better than those of the S-to-D and R-to-D links. On the contrary, \cite{21} outperforms \cite{19}, particularly when the link between S and R is better than the R to D link. & The proposed DU-NOMA is more spectral and energy efficient as compared to CRS-NOMA~\cite{19} and CSR-NOMA-ND~\cite{21}.\\
		\hline
	\end{tabular}
\end{table}
\section*{Disclosure statement}
The author(s) declare(s) no potential conflict of interest regarding the publication of this paper.
\section*{Acknowledgment}
This research was supported by the University of Chittagong, in part by the MSIT (Ministry of Science, ICT), Korea, under the ITRC (Information Technology Research Center) support program (IITP-2018-2014-1-00639) supervised by the IITP (Institute for Information \& communications Technology Promotion). This research was also supported by the Sejong University Research Faculty Program Fund (2019-2021), as well as the Basic Science Research Program through the National Research Foundation of Korea (NRF) funded by the Ministry of Education (2015R1D1A1A01061075).

\bibliography{UpDownNOMA}

\end{document}